\title{Parameterized Leaf Power Recognition via Embedding into Graph Products}
\titlerunning{Parameterized Leaf Power Recognition}
\author{David Eppstein}{Computer Science Department, University of California, Irvine, USA}{eppstein@uci.edu}{}{Supported in part by NSF grants  CCF-1618301 and CCF-1616248.}
\author{Elham Havvaei}{Computer Science Department, University of California, Irvine, USA}{ehavvaei@uci.edu}{}{}
\authorrunning{D. Eppstein and E. Havvaei}
\subjclass{ }
\keywords{leaf power, phylogenetic tree, monadic second-order logic, Courcelle's theorem, strong product of graphs, fixed-parameter tractability, dynamic programming, tree decomposition }
\let\corollary\@undefined
\let\endcorollary\@undefined
\let\lemma\@undefined
\let\endlemma\@undefined
\theoremstyle{theorem} 
\newaliascnt{lemma}{theorem}
\newtheorem{lemma}[lemma]{Lemma}
\newaliascnt{corollary}{theorem}
\newtheorem{corollary}[corollary]{Corollary}
\newcommand{\edge}{\textsc{edge}}
\newcommand{\nonedge}{\textsc{nonedge}}
\newcommand{\MSO}{\mathrm{MSO}} 
\newcommand{\incident}{\multimap} 
\newcommand{\adjacent}{\textsc{adjacent}} 
\newcommand{\leaf}{\textsc{leaf}} 
\newcommand{\horizontal}{\textsc{horizontal}} 
\newcommand{\acyclic}{\textsc{acyclic}} 
\newcommand{\alignedwith}{\textsc{aligned}} 
\newcommand{\representative}{\textsc{representative}} 
\newcommand{\represented}{\textsc{represented}} 
\newcommand{\haspath}{\textsc{path}} 
\newcommand{\isroot}{\textsc{root}} 
\begin{document}

\maketitle

\begin{abstract}
The $k$-leaf power graph $G$ of a tree $T$ is a graph whose vertices are the leaves of $T$ and whose edges connect pairs of leaves at unweighted distance at most~$k$ in $T$. Recognition of the $k$-leaf power graphs for $k \geq 7$ is still an open problem. In this paper, we provide two algorithms for this problem for sparse leaf power graphs. Our results shows that the problem of recognizing these graphs is fixed-parameter tractable when parameterized both by $k$ and by the degeneracy of the given graph. To prove this, we first describe how to embed a leaf root of a leaf power graph into a product of the graph with a cycle graph. We bound the treewidth of the resulting product in terms of $k$ and the degeneracy of $G$. The first presented algorithm uses methods based on monadic second-order logic ($\MSO_2$) to recognize the existence of a leaf power as a subgraph of the graph product. Using the same embedding in the graph product, the second algorithm presents a dynamic programming approach to solve the problem and provide a better dependence on the parameters. 


\end{abstract}

\section{Introduction}
\label{sec:intro}

\emph{Leaf powers} are a class of graphs that were introduced in 2002 by Nishimura, Ragde and Thilikos \cite{MR1874637}, extending the notion of graph powers. For a graph $G$, the $k$th power
graph $G^k$ has the same set of vertices as $G$ but a different notion of adjacency: two vertices are adjacent in $G^k$ if there is a path of at most $k$ edges between them in $G$. Determining whether a graph is a $k$th power of another graph is known to be NP-complete, for $k \geq 2$~\cite{nguyen2009hardness}. However deciding whether a graph $G$ is the second power of a graph $H$ is decidable in polynomial time when $H$ belongs to various graph classes such as bipartite graphs~\cite{lau2006bipartite}, block graphs~\cite{tuy2010square}, cactus graphs~\cite{golovach2016finding} and cactus block graphs~\cite{ducoffe2019finding}. Besides, it is possible to decide in linear time if a graph is the power of a tree~\cite{chang2015linear}. The leaf powers are defined in the same way from trees, but only including the leaves of the trees as vertices. The $k$th leaf power of a tree $T$ has the leaves of $T$ as its vertices,
with two vertices adjacent in the leaf power if there is a path of at most $k$ edges between them in $T$. A given graph $G$ is a $k$-leaf-power graph when there exists a tree $T$ for which $G$ is the $k$th leaf power. In this case, $T$ is a \emph{$k$-leaf root} of $G$. In general, the $k$-leaf root may have vertices and edges that are not part of the input graph. For example, \autoref{fig:3-leaf} shows a 3-leaf power alongside one of its 3-leaf roots. Nishimura et al., further, derived the first polynomial-time algorithms to recognize $k$-leaf powers for $k = 3$ and $k = 4$~\cite{MR1874637}.  

One application of recognizing leaf powers arises as a formalization of a problem in computational biology, the reconstruction of evolutionary history and evolutionary trees from information about the similarity between species \cite{MR2001887,fitch1967construction}. In this problem, the common ancestry of different species can be represented by an evolutionary or phylogenetic tree, in which each vertex represents a species and each edge represents a direct ancestry relation between two species. We only have full access to living species, the species at the leaves of the tree; the other species in the tree are typically long-extinct, and may be represented physically only through fossils or not at all. If we suppose that we can infer, from observations of living species, which ones are close together (within some number $k$ of steps in this tree) and which others are not, then we could use an algorithm for leaf power recognition to infer a phylogenetic tree consistent with this data.
  
 \begin{figure}[t]
   \center
      \includegraphics[scale=.33]{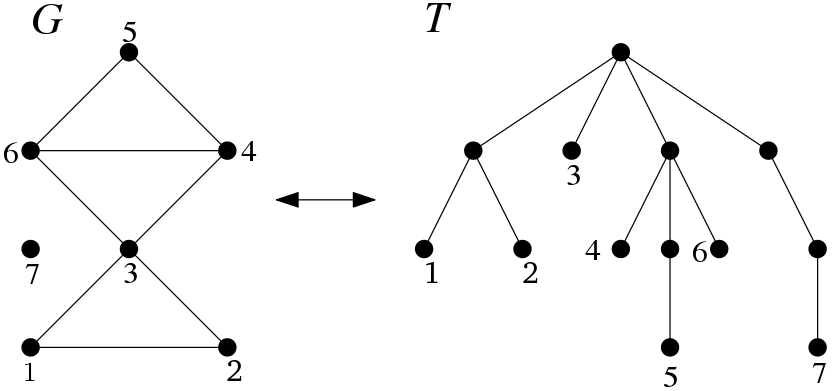}      
\caption{A 3-leaf power graph $G$ and one of its 3-leaf roots $T.$  }
\label{fig:3-leaf}
\end{figure}  

\subsection{New Results}
In this paper, presenting two different algorithms, we prove that the $k$-leaf powers of degeneracy $d$ can be recognized in time that is fixed-parameter tractable when parameterized by $k$ and $d$. Here, the degeneracy of a graph is the maximum, over its subgraphs, of the minimum degree of any subgraph. 

Our first algorithm makes ample use of Courcelle's theorem~\cite{MR1217156} while the second employs a dynamic programming method to provide a time complexity with a better dependence on the parameters. Although the second algorithm is more efficient, we retain the description of the first algorithm as it was the source of our inspiration to devise a more practical method to prove the fixed-parameter tractability of $k$-leaf powers, and as we feel that our technique of using graph products
(which we use in both algorithms) can have broader applications.

Both algorithms have running time polynomial (in fact linear) in the size of the input graph, multiplied by a factor that depends non-polynomially on $k$ and $d$. We also apply the same methods to a more general problem in which each edge of the input graph is labeled by a range of distances, constraining the corresponding pair of leaves in the leaf root to have a distance in that range.

Later, it will be discussed that leaf powers have unbounded clique-width. However, it is known that the $k$-leaf powers have bounded clique-width when $k$ is bounded~\cite{gurski2007clique}.
A wide class of graph problems (those expressible in a version of monadic second order logic quantifying over only vertex sets, $\MSO_1$) can be solved in fixed-parameter time for graphs of bounded clique-width, via Courcelle's theorem. However we have been unable to express the recognition of leaf powers in $\MSO_1$. Instead, our algorithm uses a more powerful version of monadic second order logic allowing quantification over edge sets, $\MSO_2$. Later, it will be discussed that leaf powers with bounded degeneracy have bounded treewidth, allowing us to apply a form of Courcelle's theorem for $\MSO_2$ for graphs of bounded treewidth.

However, there is an additional complication that makes it tricky to apply these methods to leaf power recognition.
As stated earlier, the tree that we wish to find, for which our given input graph is a leaf power, will in general include vertices and edges that are not part of the input, but $\MSO_2$ can only quantify over subsets of the existing vertices and edges of a graph, not over sets of vertices and edges that are not subsets of the input. To work around this problem,
we apply Courcelle's theorem not to the given graph $G$ itself, but to a \emph{graph product} $G \boxtimes C_k$
where $C_k$ is a $k$-vertex cycle graph. We prove that a leaf root (the tree for which $G$ is a leaf power, if there is one) can be embedded as a subgraph of this product, that it can be recognized by an $\MSO_2$ formula applied to this product, and that this product has bounded treewidth whenever $G$ is a $k$-leaf power of bounded degeneracy.
In this way we can recognize $G$ as a leaf power, not by applying Courcelle's theorem to $G$, but by applying it to the graph product.

Thus, our algorithm combines the following ingredients:
\begin{itemize}
\item Our embedding of the $k$-leaf root as a subgraph of the graph product $G \boxtimes C_k$.
\item Our logical representation of $k$-leaf roots as subgraphs of graph products.
\item Courcelle's theorem, which provides general-purpose algorithms for testing $\MSO_2$ formulas on graphs of bounded treewidth.
\item The fact that leaf powers of bounded degeneracy also have bounded treewi-dth.
\item The fact that, by taking a product with a graph of bounded size, we preserve the bounded treewidth of the product.
\end{itemize}
Our algorithm runs in fixed-parameter tractable time when parameterized by $k$ and the degeneracy $d$ of the given input graph. In particular, it runs in linear-time when $k$ and $d$ are both constant.

Our results provide the first known efficient algorithms for recognizing $k$-leaf powers for $k \ge 7$, for graphs of bounded degeneracy.
More generally, our method of embedding into graph products appears likely to apply to other graph problems involving network design (the addition of edges to an existing graph, rather than the identification of a special subgraph of the input). In the case we apply this method to leaf power recognition, we expect that it should be possible to translate
our $\MSO_2$ formula over the graph product into a significantly more complicated $\MSO_2$ formula over the input graph, but the method of embedding into graph products considerably simplifies our task of designing a logical formula for our problem. Later, we also profit from the same embedding into a product as a key step in our dynamic programming algorithm to decide whether a graph is a $k$-leaf power.

\subsection{Related Work}
\begin{figure}[t]
   
    \includegraphics[width=\linewidth]{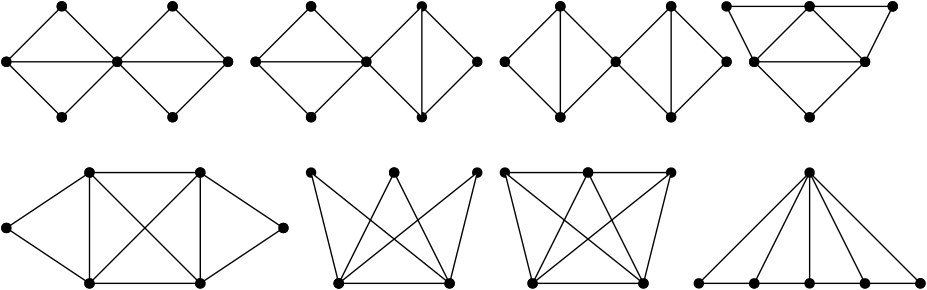}\par 
     
\caption{ A graph is a 4-leaf power if and only if it is chordal and does not contain any of the graphs above as a subgraph. }
\label{fig:forbidden}
\end{figure}
Polynomial-time algorithms are known for recognizing $k$-leaf powers for $k \leq 6$.
\begin{itemize}
\item A graph is a 2-leaf power if it is a disjoint union of cliques, so this class of graphs is trivial to recognize.
\item There exist various ways to characterize 3-leaf powers~\cite{MR1874637,MR2211095,dom2004error,MR2237730}, some of which lead to efficient algorithms. For instance, one way to determine if a graph is a 3-leaf power is to check whether it is bull-, dart- and gem-free and chordal~\cite{dom2004error}. The chordal graphs have a known recognition algorithm, and testing for the existence of any of the other forbidden induced subgraphs is polynomial, because they all have bounded size.
\item Similarly, there are various known ways to characterize  4-leaf powers~\cite{MR1874637,MR2237730,dom2005extending,MR2479182}. One is that a graph is a 4-leaf power if and only if it is chordal and does not contain any of the graphs depicted in \autoref{fig:forbidden} as induced subgraphs~\cite{MR2237730}. Again, this leads to a polynomial-time recognition algorithm, because all of these graphs have bounded size.
\item $k$-leaf powers can be recognized in polynomial time if the $(k-2)$- Steiner root problem can be solved in polynomial time. Chang and Ko, in 2007, provided a linear-time recognition algorithm for 3-Steiner root problem~\cite{chang20073}. This implies that 5-leaf powers can be recognized in linear time. Besides, Brandst{\"a}dt et al. provided a forbidden induced subgraph characterization for the distance-hereditary 5-leaf powers~\cite{MR2537378}.

\item Ducoffe has recently extended result of Chang and Ko~\cite{chang20073} and provided a polynomial-time recognition algorithm of 4-Steiner powers~\cite{ducoffe20194} which as stated, it leads to a polynomial-time recognition of 6-leaf powers.
\end{itemize}

Polynomial-time structural characterization of $k$-leaf powers for  $k \geq 7$ is still an open problem. 
  
Throughout the literature, there exist many structural characterizations of leaf powers which provide potentially useful insight into this class of graphs. It is known, for instance, that all leaf powers are strongly chordal, but the converse is not always true. Further, Kennedy et al. showed that strictly chordal graphs are always $k$-leaf powers for $k \geq 4$; these are the chordal graphs that are also, dart- and gem-free. They provided a linear-time algorithm to construct $k$-leaf roots of strictly chordal graphs \cite{MR2577678}.

For all $k \geq 2$, every $k$-leaf power is also a ($k+2$)-leaf power. A $(k+2)$-leaf root of any $k$-leaf-power can be obtained from its $k$-leaf root, by subdividing all edges incident to leaves. However, the problems of recognizing $k$-leaf powers for different values of $k$ do not collapse: for all $k \geq 4$, there exists a $k$-leaf power which is not a $(k+1)$-leaf power  \cite{brandstadt2008k}. 

\subsection{Organization}

This paper is organized as follows.
We begin in \autoref{sec:preliminaries} with some preliminary definitions and a survey of the relevant background material for our results.
In \autoref{sec:embedding} we describe how to embed leaf roots into graph products , a construction used in both of our algorithms.
We provide a logical formulation of the leaf power recognition problem in \autoref{sec:logic}, and in \autoref{sec:courcelle} we use this formulation for our first algorithm for the problem.
We generalize the problem to leaf powers with restricted distance ranges on each input graph edge in \autoref{sec:generalize}.
Our dynamic programming algorithm for leaf powers is presented in \autoref{sec:dynamic}.
We conclude with some general observations in \autoref{sec:conclusion}.

\section{Preliminaries}
\label{sec:preliminaries}

\subsection{Definitions}
Throughout this paper, we let $G(V,E)$ denote a simple undirected graph (typically, the input to the leaf power recognition problem). If $u$ and $v$ are two vertices in $V$ that are adjacent in $G$, we let $e(u,v)$ denote the edge connecting them. 
  \begin{figure}[t!]
   \center
      \includegraphics[scale=.35]{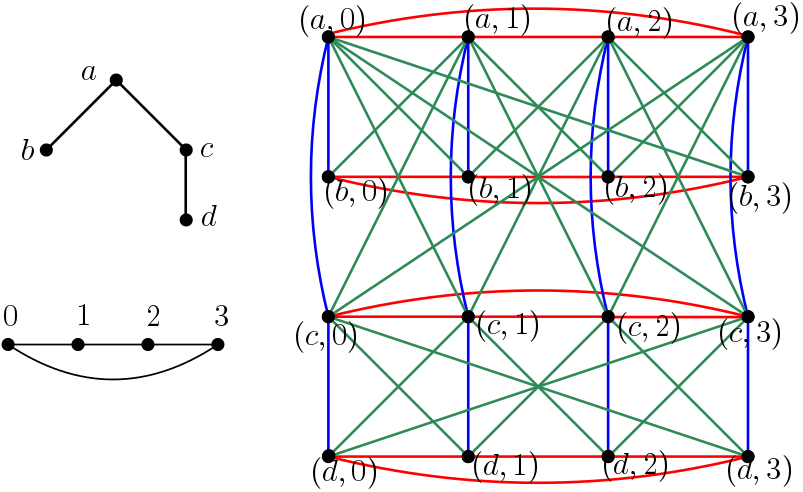}      
\caption{ The graph on the right is the strong product of a four-vertex path graph (top left) and a four-vertex cycle graph (bottom left). The colors indicate the partition of the edges into vertical, horizontal, and diagonal subsets.}
\label{cross}
\end{figure}

The strong product of graphs $G_1$ and $G_2$, denoted as $G_1  \boxtimes G_2$, is a graph whose vertices are ordered pairs of a vertex from $G_1$ and a vertex from $G_2$. In it, two distinct vertices $(u_1,u_2)$ and $(v_1,v_2)$ are adjacent if and only if for all $i \in \{1,2\}$,  $u_i = v_i$ or $u_i$ and $v_i$ are adjacent in $G_i$. \autoref{cross} shows an example, the strong product of a four-vertex path graph with a four-vertex cycle graph. When we construct a strong product, we will classify the edges of the product into three subsets:
\begin{itemize}
\item We call an edge from $(u_1,u_2)$ to $(v_1,v_2)$ a \emph{vertical edge} if $u_2=v_2$.
The edges of this type form $|V(G_2)|$ disjoint copies of $G_1$ as subgraphs of the product.
\item We call an edge from $(u_1,u_2)$ to $(v_1,v_2)$ a \emph{horizontal edge} if $u_1=v_1$.
The edges of this type form $|V(G_1)|$ disjoint copies of $G_2$ as subgraphs of the product.
\item We call the remaining edges, for which $u_1\ne v_1$ and $u_2\ne v_2$, \emph{diagonal edges}. The subgraph composed of the diagonal edges forms a different kind of graph product,
the \emph{tensor product} $G_1\times G_2$.
\end{itemize}
We may think of these three edge sets as forming an (improper) edge coloring of the graph product.
In \autoref{cross} these edge sets are colored  blue, red and green, respectively. 
\subsection{Graph Parameters}
\label{subsec:parameters}
One of the simplest ways of parameterizing sparse graphs is by their \emph{degeneracy}. The degeneracy $d(G)$ of a graph $G$ is the smallest number such that every nonempty subgraph of $G$ contains at least one vertex of degree at most $d(G)$~\cite{lick1970k}. Degeneracy may be equivalently defined as the least $d$ for which an ordering of vertices of the graph exists in which each vertex has at most $d$ later neighbors in that ordering. There exists many problems in literature parameterized by the degeneracy as a measure of graph sparseness~\cite{eppstein2010listing,alon2009linear,cai2006random}, as it implies that every graph of size $n$ and degeneracy $d$ has at most $(n-1)d$ edges. Degeneracy may be computed in linear-time by a greedy algorithm that repeatedly removes the minimum-degree vertex and records the largest degree seen among the vertices at the time they are removed~\cite{MR709826}.

The notion of \emph{treewidth}, a more complicated graph sparsity parameter, was first introduced by Bertel\'e and Brioschi~\cite{BB72} and Halin~\cite{MR0444522} and later rediscovered  by Robertson and Seymour \cite{MR855559}. One way to define treewidth is to use the concept of tree decomposition. A tree decomposition of  graph $G$ consists of a tree $T$ where each vertex $X_i \in T$ (called a bag) is a subset of vertices of $G$.
This tree and its bags are required to satisfy the following properties:
\begin{itemize}
\item For each edge $e(u,v)$ in $G$, there exists a bag in $T$ containing both $u$ and $v$; and
\item For each vertex $v$ in $G$, the bags containing $v$ form a nonempty connected subtree of $T$.
\end{itemize}
The width of a tree decomposition is the size of its largest bag, minus one. The treewidth of a graph is defined as the minimum width achieved over all tree decompositions of the graph. Bounded treewidth graphs are especially interesting from an algorithmic point of view. Many well-known NP-complete problems have linear-time algorithms on graphs of bounded treewidth~\cite{MR1268488}. 

Another related graph parameter, \emph{ clique-width}, was introduced by Courcelle et al.{} to characterize the structural complexity of graphs \cite{MR1217156}. The clique-width of a graph $G$ is the minimum number of labels necessary to construct $G$ by means of four graph operations: creation of a new vertex with a label, vertex disjoint union of labeled graphs, insertion of an edge between two vertices with specified labels and relabeling of vertices. Relevantly for us, Courcelle et al. showed that unit interval graphs are of unbounded clique-width. A graph is an interval graph if and only if all its vertices can be mapped into intervals on a straight line such that two vertices are adjacent when the corresponding intervals intersect each other. In the unit interval graphs, each interval has a unit length. As shown by Brandst{\"a}dt et al., unit interval graphs belong to the class of leaf powers, which implies that leaf powers also have unbounded clique-width \cite{brandstadt2008ptolemaic,MR2574841}. 

These three properties are defined differently to each other, and may have significantly different values. For instance, the complete bipartite graph $K_{n,n}$ has clique-width two but treewidth and degeneracy $n$, and the $n\times n$ grid graph has degeneracy two but clique-width and treewidth $\Omega(n)$. Nevertheless, as stated earlier, all leaf powers are chordal graphs and it is known for a chordal graph, treewidth is equal to maximum clique number minus one~\cite{MR855559}. This implies that treewidth of leaf powers are equal to their degeneracy. 
\subsection{Courcelle's Theorem}
By considering graphs as logical structures, their properties can be expressed in first-order and second-order logic. In first-order logic, graph properties are expressed as logical formulas wherein the variables range over vertices and the predicates include equality and adjacency relations. Second-order logic is an extension of first-order logic with the power to quantify over relations.  Particularly, many natural graph properties can be described in monadic second-order logic, which is a restriction of second-order logic in which only unary relations (sets of vertices or edges) are allowed~\cite{MR1480957}.

There exist two variations of monadic second-order logic: $\MSO_1$ and $\MSO_2$. In $\MSO_1$, quantification is allowed only over sets of vertices, while $\MSO_2$ allows quantification over both sets of vertices and sets of edges. $\MSO_2$ is strictly more expressive; there are some properties, such as Hamiltonicity \cite{MR1451381}, which are expressible in $\MSO_2$ but not in $\MSO_1$. A graph property is \emph{$\MSO_2$-expressible} if there exists an $\MSO_2$ formula to express it, in which case the corresponding class of graphs becomes \emph{$\MSO_2$-definable}.

The algorithmic connection between treewidth and monadic second-order logic is given by Courcelle's theorem, according to which every property definable in monadic second-order logic can be tested in linear time on graphs of bounded treewidth \cite{MR1042649}. Later, Courcelle et al. extended this theorem to the class of graphs with bounded clique-width when the underlying property is $\MSO_1$-definable~\cite{MR1739644}. In our application of Courcelle's theorem, we will use an $\MSO_2$ formula with a free variable $\horizontal$, an edge set, which we will use to pass to the formula certain information about the structural decomposition of the graph it is operating on. This extension of Courcelle's theorem to formulas with a constant number of additional free variables, whose values are assigned through some extra-logical process prior to applying the theorem, is non-problematic and standard.

However, even in $\MSO_2$, it is only possible to quantify over subsets of vertices and edges that belong to the graph to which the logical formula is applied. Much of the difficulty of the leaf power problem rests in this restriction. If we could quantify over edges and vertices that were not already present,
we could construct a formula that asserts the existence of sets of vertices and edges forming a leaf root of a given graph, and then add clauses to the formula that ensure that the quantified sets describe a valid leaf root. However, we are not allowed such quantification, because in general the leaf root has vertices and edges that do not belong to our input graph. To apply Courcelle's theorem to leaf power recognition, we must instead find a way to express the property of being a leaf power using only quantification over subsets of vertices and edges of the graph to which we apply the theorem. For this reason, the problem of leaf power recognition forms an important test case for the ability to express graph problems in MSO logic.

A problem is \emph{fixed-parameter tractable} with respect to a parameter $x$ of the input if the problem can be solved in time $f(x)n^{O(1)}$ where $n$ is the size of the input, $f$ is a computable function of $x$ (independent of $n$), and the exponent of $n$ in the $O(1)$ term is independent of~$x$. Courcelle's theorem is the foundation of many fixed-parameter tractable algorithms \cite{bannister2014crossing,MR2120320,MR3761167,MR2220663}, as it proves that properties expressible in $\MSO_1$ or $\MSO_2$ are fixed-parameter tractable with respect to the clique-width or treewidth (respectively) of the input graph.

\section{Embedding Leaf Roots into Graph Products}
\label{sec:embedding}

In this section, we show that every $k$-leaf power has a $k$-leaf root that can be embedded in the graph product $G \boxtimes C_k$. 
Let $G$ be a $k$-leaf power graph, and $T$ be a $k$-leaf root of $G$.
If $G$ is not connected, we can handle each of its connected components independently;
in this way, we can assume from now on, without loss of generality that $G$ is a connected graph with at least three vertices, and that $T$ is a leaf root chosen arbitrarily among the possible $k$-leaf roots of $T$. It follows from these assumptions that $T$ is a tree, because every edge in $G$ must be represented by a path in $T$. Because $T$ has at least three leaves, it has at least one interior node; we choose one of these nodes arbitrarily to be the root of $T$.
Additionally, every vertex or edge of $T$ participates in a path of length at most $k$ between two leaves, representing an edge of $G$. For, if some vertices and edge do not participate in these paths, removing all non-participating vertices and edges from $T$ would produce a smaller leaf root, without creating any new leaves. But this removal would disconnect pairs of leaves on the opposite sides of any removed edge, contradicting the assumption that $G$ is connected.
\begin{figure}[t!]
\center
\includegraphics[scale=.25]{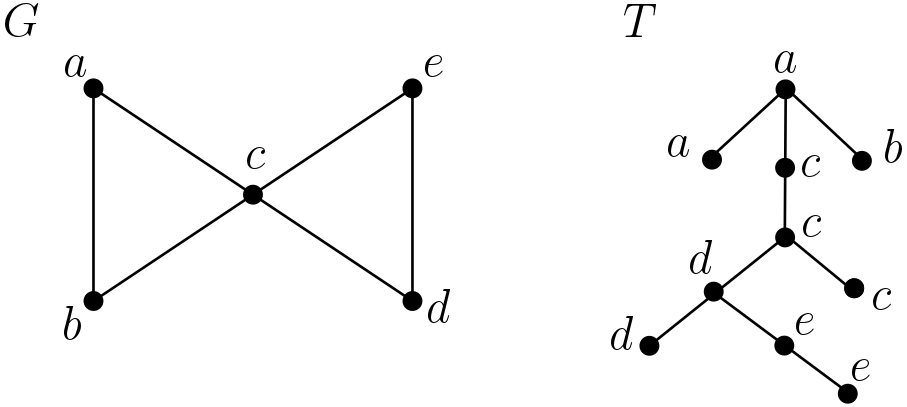}      
\caption{A 4-leaf power graph $G$ (left), and one of its leaf roots $T$ (right). Each leaf of $T$ is labeled by the vertex of $G$ that it represents, and each internal node of $T$ is labeled by its closest leaf node.
When there are ties at a node (as for instance at the root of $T$) the choice of label is made arbitrarily among the closest leaf nodes whose labels appear among the children of the node.}
\label{4_leaf_rep}
\end{figure}  

As the first step of the embedding, we provide a subroutine that takes as input, a graph and a $k$-leaf root of the form, mentioned above and embeds it in $G \boxtimes C_k$ as a subgraph. While, our leaf-power recognition algorithm does not employ this subroutine, as it does not have access to the $k$-leaf root; this subroutine solely fulfills the purpose of proving that the $k$-leaf root of this form can be embedded in the graph product. For that, we label the vertices of $T$ with the names of vertices in $G$.
Each vertex of $T$ will get a label in this way; some labels will be used more than once.
In particular, we label each leaf of $T$ by the vertex of $G$ represented by that leaf.
Then, as shown in \autoref{4_leaf_rep}, we give each non-leaf node of $T$ the same label as its closest leaf. If there are two or more closest leaves, we choose one arbitrarily among the labels already applied to the children of the given interior node. In this way, when the same label appears more than once, the tree nodes having that label form a connected path in~$T$.

As we now show, these labels, together with the depths of the nodes modulo~$k$, can be used to embed the $k$-leaf root $T$ into the strong product $G\boxtimes C_k$,
where $C_k$ denotes a $k$-vertex cycle graph.
\begin{lemma}
\label{lem:leaf-root-embedding}
If $G$ is a connected $k$-leaf power graph on three or more vertices, and $T$ is any $k$-leaf root of $G$, then $T$
can be embedded as a subtree of the strong product $G  \boxtimes C_k$.  
Additionally, the embedding can be chosen in such a way that each horizontal cycle in the strong product (the product of a vertex $v$ of $G$ with $C_k$) contains exactly one leaf of the embedded copy of $T$, the leaf representing $v$.
\end{lemma}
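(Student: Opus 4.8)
The plan is to embed $T$ using precisely the data the preceding discussion set up: I would send each node $x$ of $T$ to the pair $\bigl(\lambda(x),\ \mathrm{depth}(x) \bmod k\bigr)$, where $\lambda(x)$ is the label assigned to $x$ (its nearest leaf, ties broken among the labels of its children) and $\mathrm{depth}(x)$ is the number of edges from the chosen root down to $x$, so that the second coordinate ranges over $V(C_k)=\{0,\dots,k-1\}$. To establish the lemma I must check three things: that this map sends edges of $T$ to edges of $G\boxtimes C_k$, that it is injective, and that each horizontal cycle $\{(v,c):c\in V(C_k)\}$ meets the image in exactly one \emph{leaf} of $T$, namely the one representing $v$.

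For adjacency, take an edge of $T$ joining a node $x$ to its child $y$. Their depths differ by one, so the second coordinates are adjacent in $C_k$. If $\lambda(x)=\lambda(y)$, the image is a horizontal edge of the product and nothing more is needed; the heart of the argument, and the step I expect to be the \textbf{main obstacle}, is the case $\lambda(x)=u\ne w=\lambda(y)$, where I must prove that $u$ and $w$ are adjacent in $G$, i.e.\ that $\mathrm{dist}_T(u,w)\le k$, so that the image is a diagonal edge. Here I would use the tie-breaking rule twice. Because $x$ is labeled $u$, some child of $x$ already carries the label $u$ and the monochromatic path of $u$-labeled nodes descends into that child; hence the leaf $u$ lies in a child-subtree of $x$ other than the subtree of $y$. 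Symmetrically, the leaf $w$ lies inside the subtree of $y$. Consequently $u\to\cdots\to x\to y\to\cdots\to w$ is a simple path and $\mathrm{dist}_T(u,w)=\mathrm{dist}_T(u,x)+1+\mathrm{dist}_T(y,w)$. Now I invoke the earlier fact that the edge $(x,y)$ lies on some leaf-to-leaf path of length at most $k$, say from a leaf $a$ on the $x$-side to a leaf $b$ on the $y$-side, so that $\mathrm{dist}_T(a,x)+1+\mathrm{dist}_T(y,b)\le k$. Since $u$ is a nearest leaf of $x$ we have $\mathrm{dist}_T(u,x)\le\mathrm{dist}_T(a,x)$, and since $w$ is a nearest leaf of $y$ we have $\mathrm{dist}_T(w,y)\le\mathrm{dist}_T(b,y)$; substituting gives $\mathrm{dist}_T(u,w)\le k$, as required.

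For injectivity, nodes with distinct labels obviously have distinct first coordinates, so a collision could only occur between two nodes sharing a label $u$, which lie on the monochromatic $u$-path. The depths along that path are consecutive integers, so a collision modulo $k$ would force the path to span at least $k$ depths. But every node $x'$ of that path has $u$ as a nearest leaf at distance equal to its height along the path, and $x'$ sits on a leaf-to-leaf path of length at most $k$ whose two endpoint leaves are therefore each at distance at least $\mathrm{dist}_T(x',u)$ from $x'$; summing these two distances shows $2\,\mathrm{dist}_T(x',u)\le k$, so the path has at most $\lfloor k/2\rfloor+1\le k$ nodes and its depths are pairwise distinct modulo $k$. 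Finally, for the horizontal-cycle condition I would use that the leaves of $T$ are in bijection with the vertices of $G$, with each leaf labeled by the vertex it represents; thus exactly one leaf carries the label $v$, and since any node mapped into $\{(v,c):c\in V(C_k)\}$ must have label $v$, that leaf is the unique leaf of the embedded tree lying in the horizontal cycle of $v$. Assembling these three verifications gives the stated embedding.
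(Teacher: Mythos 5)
Your proposal follows essentially the same route as the paper: the identical map $x\mapsto\bigl(\lambda(x),\,\mathrm{depth}(x)\bmod k\bigr)$, the same three verifications (edge preservation, injectivity, one representative leaf per horizontal cycle), and the same key fact that every edge and node of $T$ lies on a leaf-to-leaf path of length at most $k$ because $G$ is connected. One intermediate step, however, over-claims. In the adjacency case $\lambda(x)=u\ne w=\lambda(y)$ you assert that ``because $x$ is labeled $u$, some child of $x$ already carries the label $u$,'' and symmetrically that $w$ lies inside the subtree of $y$, in order to conclude the \emph{exact} equality $\mathrm{dist}_T(u,w)=\mathrm{dist}_T(u,x)+1+\mathrm{dist}_T(y,w)$. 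The tie-breaking rule does not deliver this: it only constrains genuine ties, and when the closest leaf to $x$ is strictly nearest through the \emph{parent} of $x$, the label $u$ arrives from above and no child of $x$ need carry it (likewise $w$ need not lie below $y$). Fortunately the claim is also unnecessary: the triangle inequality already gives $\mathrm{dist}_T(u,w)\le\mathrm{dist}_T(u,x)+1+\mathrm{dist}_T(y,w)$, and with ``$=$'' weakened to ``$\le$'' your chain of inequalities closes exactly as in the paper, which indeed only compares the $u$--$w$ leaf path against the length of $P$ rather than computing it. Your injectivity argument is in the same spirit as the paper's but quantitatively sharper: you derive $2\,\mathrm{dist}_T(x',u)\le k$ by summing the two endpoint distances of the leaf-to-leaf path through $x'$, where the paper settles for the bound $k-1$ on the length of each monochromatic path; both versions rest on the same property, also implicit in the paper's ``cannot wrap around'' step, that depths along a same-label path are consecutive, i.e., that each label class is a path descending to the leaf carrying that label. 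The horizontal-cycle verification matches the paper's one-line argument. In sum: same approach and correct conclusion, provided you replace the unjustified exact-distance claim by the triangle inequality.
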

\begin{figure}[t]
\center
\includegraphics[scale=.23]{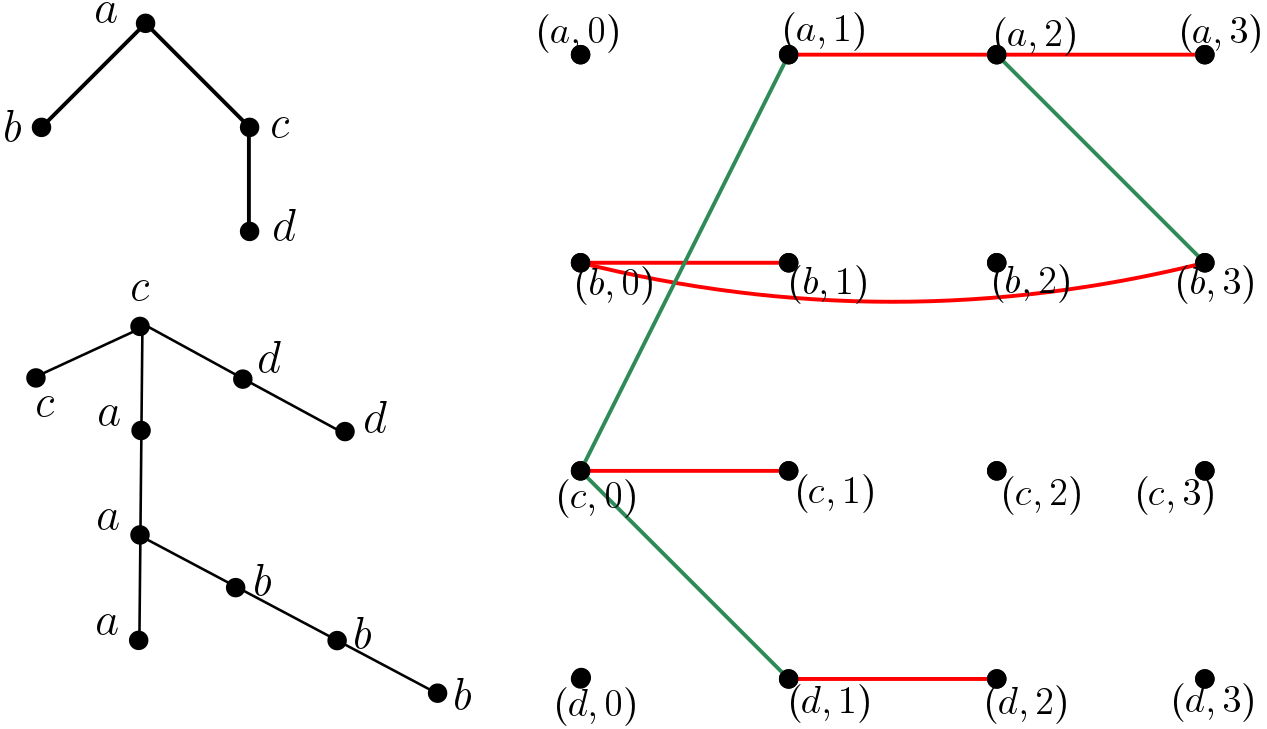}      
\caption{ The graph on the bottom left is a 4-leaf root $T$ of graph $G$ (top left).  $T$ can be embedded in the strong product $G \boxtimes C_4$ (right), by mapping each vertex $u$ of $T$ to the pair $(v,i)$ where $v$ is the label of $u$ and $i$ is the depth of $u$ (modulo~$k$).}
\label{embed}
\end{figure}  
\begin{proof}
We map a vertex $u$ of $T$ to the pair $(v,i)$
where $v$ is the label assigned to $u$ (the name of a vertex in $G$)
and $i$ is the depth of $u$ (its distance from the root of $T$), taken modulo~$k$.
This pair is one of the vertices of the strong product, so we have mapped vertices of $T$ into vertices of the strong product. An example of such embedding can be seen in  \autoref{embed}. Because $G$ is assumed to be connected, each node of $T$ participates in at least one path of length at most $k$ between two leaves of $T$, representing an adjacency of $G$; it follows that  the label for each node of $T$ is at most $k-1$ steps away from the node, and that each path of same-labeled nodes in $T$ has length at most $k-1$. As a consequence, when we take depths modulo~$k$, none of these paths can wrap around the cycle and cover the same vertex of the graph product more than once. That is, our mapping from $T$ to $G  \boxtimes C_k$ is one-to-one.
Because each leaf of $T$ is labeled with the vertex of $G$ that it represents, this mapping has the property described in the lemma, that each horizontal cycle in the strong product contains exactly one leaf of the embedded copy of $T$, the leaf representing the vertex whose product with $C_k$ forms that particular horizontal cycle.

We must also show that this mapping from $T$ to $G \boxtimes C_k$
maps each pair of vertices that are adjacent in $T$ into a pair of vertices that are adjacent in $G \boxtimes C_k$. Recall that adjacency in $G \boxtimes C_k$ is the conjunction of two conditions:
two vertices in the product are adjacent if their first coordinates are equal or adjacent in $G$ and their second coordinates are equal or adjacent in $C_k$.
Because every two adjacent vertices in $T$ have depths that differ by one,
the second coordinates of their images in the product will always be adjacent in $C_k$.
It remains to show that, when two vertices are adjacent in $T$, their images in the product have first coordinates that are equal or adjacent in $G$. That is, the labels of the two adjacent vertices in $T$ should be equal or adjacent.

Rephrasing what we still need to show, it is the following: whenever two adjacent vertices in $T$ have different labels, those labels represent adjacent vertices in $G$.

To see that this is true, consider two adjacent vertices $u_1$ and its parent $u_2$ in $T$,
labeled by two different vertices $v_1$ and $v_2$ in $G$.
As we already stated at the start of this section, the assumption of the lemma that $G$ is connected implies that edge $u_1u_2$ in $T$ participates in at least one path $P$ of length at most $k$ between two leaves, corresponding to an adjacency in $G$. But because $v_1$ and $v_2$ are represented by the closest leaves to $u_1$ and $u_2$ (respectively) the length of the path in $T$ between the leaves representing $v_1$ and $v_2$ must be at most equal to the length of $P$.
Therefore, there is a path of length at most $k$ between the leaves representing $v_1$ and $v_2$, so $v_1$ and $v_2$ are adjacent in the $k$-leaf power $G$, as required.
\end{proof}

Based on this embedding, we can prove the following characterization of leaf powers,
which we will use in our application of Courcelle's theorem to the problem. It is important, for this characterization, that we express everything intrinsically in terms of the properties of the graph product $G\boxtimes C_k$, its edge coloring, and its subgraphs, without reference to the given graph $G$.
\begin{lemma}
\label{lem:prop}
A given connected graph $G$ on three or more vertices is a $k$-leaf power if and only if the product $G\boxtimes C_k$
has a subgraph $T$ with the following properties:
\begin{enumerate}
\item\label{prop:forest}
$T$ is $1$-degenerate (i.e., a forest).
\item\label{prop:oneleaf}
Every vertex of $G\boxtimes C_k$ is connected by horizontal edges of the product
to exactly one leaf of $T$.
\item\label{prop:adjacent}
Two vertices of $G\boxtimes C_k$ are the endpoints of a non-horizontal edge of the product
if and only if the corresponding leaves of $T$ (given according to Property~\ref{prop:oneleaf})
are the distinct endpoints of a path of length at most $k$ in $T$.
\end{enumerate}
\end{lemma}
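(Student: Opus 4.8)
The plan is to prove the two directions of the biconditional separately: in the forward direction I construct the required subgraph $T$ by invoking the embedding of \autoref{lem:leaf-root-embedding}, and in the backward direction I reconstruct a $k$-leaf root from an abstract $T$ satisfying the three properties. Throughout, the bridge between the two sides is the elementary observation about the product structure that a \emph{non-horizontal} edge joins a vertex of the horizontal cycle $\{u\}\times C_k$ to a vertex of $\{w\}\times C_k$ (with $u\ne w$) if and only if $uw\in E(G)$: one direction holds because any product edge forces the first coordinates to be equal or $G$-adjacent, and the other because when $uw\in E(G)$ the vertical edge $(u,i)(w,i)$ is present.

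For the forward direction, suppose $G$ is a $k$-leaf power and let $T$ be the image of a $k$-leaf root under \autoref{lem:leaf-root-embedding}. Property~\ref{prop:forest} is immediate, since the embedded leaf root is a tree and hence a forest. Property~\ref{prop:oneleaf} follows because the horizontal edges of the product are exactly the edges inside the cycles $\{v\}\times C_k$, so each cycle is a single horizontal-edge component; \autoref{lem:leaf-root-embedding} guarantees that this cycle contains exactly one leaf of $T$, so every product vertex reaches exactly that one leaf along horizontal edges. For Property~\ref{prop:adjacent} I would compose two equivalences: by the observation above, a non-horizontal product edge between cycles $u$ and $w$ exists iff $uw\in E(G)$, while the leaves $\ell_u,\ell_v$ representing $u,w$ lie at distance at most $k$ in the leaf root iff $uw\in E(G)$, this last being the defining property of a $k$-leaf root. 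Chaining these gives Property~\ref{prop:adjacent}.

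For the backward direction, suppose $T\subseteq G\boxtimes C_k$ satisfies the three properties; I must produce a $k$-leaf root of $G$. First I would use Property~\ref{prop:oneleaf} to build a bijection between $V(G)$ and the leaves of $T$: since each cycle $\{v\}\times C_k$ is one horizontal-edge component, ``exactly one reachable leaf'' forces each cycle to contain exactly one leaf $\ell_v$, and because horizontal edges never leave a cycle, these $\ell_v$ account for all leaves of $T$. Next I would read Property~\ref{prop:adjacent}, again through the observation above, as the identity of relations $uw\in E(G)\iff d_T(\ell_u,\ell_v)\le k$ on this leaf set.

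It remains to promote the forest $T$ to a genuine $k$-leaf root, and this is where I expect the main difficulty to lie: a priori $T$ could be disconnected or carry isolated vertices, neither of which is permitted for a leaf root. I would establish connectivity from the connectivity of $G$: if leaves $\ell_u,\ell_v$ lay in different components of $T$ then $d_T(\ell_u,\ell_v)=\infty$, so the equivalence above gives $uv\notin E(G)$; applying this across the partition of the leaves by their components would disconnect $G$, contradicting the standing assumption that $G$ is connected. Hence all leaves lie in one tree component, and the remaining components must be isolated vertices, which can be deleted without changing any distance among leaves. The resulting tree $T'$ has the vertices of $G$ as exactly its degree-one vertices and realizes $G$-adjacency as tree-distance at most $k$, so it is a $k$-leaf root and $G$ is a $k$-leaf power. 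The crux is thus the careful translation of the intrinsic, product-based Properties~\ref{prop:oneleaf} and~\ref{prop:adjacent} into leaf-and-adjacency data of an abstract tree, together with the use of connectivity of $G$ to certify that this tree is connected.
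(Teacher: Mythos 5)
Your proposal is correct and follows essentially the same route as the paper's proof: the forward direction invokes \autoref{lem:leaf-root-embedding}, and the backward direction reads the embedded forest back off as a $k$-leaf root via the one-leaf-per-horizontal-cycle correspondence and the distance-at-most-$k$ condition. If anything, you are more careful than the paper's own terse argument, since you explicitly justify promoting the forest $T$ to a tree (connectivity of $G$ forces all leaves into a single component, and the remaining components are isolated vertices that can be deleted), a step the paper leaves implicit.
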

\begin{proof}
A subgraph obeying these properties is a forest (Property~\ref{prop:forest}), whose leaves can be placed into one-to-one correspondence with the vertices of~$G$ (Property~\ref{prop:oneleaf}, using the fact that the horizontal cycles of the product correspond one-to-one with vertices of $G$). It has a path of length at most $k$ between two leaves if and only if the corresponding vertices of~$G$ are adjacent (Property~\ref{prop:adjacent}).
So if it exists, it is a $k$-leaf root of $G$ and $G$ is a $k$-leaf power.

In the other direction, if $G$ is a connected $k$-leaf power, let $T$ be a $k$-leaf root of $G$. Then, according to \autoref{lem:leaf-root-embedding}, $T$ can be embedded as a subtree of $G\boxtimes C_k$ (Property~\ref{prop:forest}), with exactly one leaf for each horizontal cycle (Property~\ref{prop:oneleaf}), that forms a $k$-leaf root of $G$ (Property~\ref{prop:adjacent}).
So when $G$ is a $k$-leaf power, a subgraph $T$ obeying the properties of the lemma exists.\end{proof}

\section{Logical Expression}
\label{sec:logic}
In this section, we describe how to express the components of \autoref{lem:prop}, our characterization of the products $G\boxtimes C_k$ that contain a $k$-leaf root of $G$,
in monadic second-order logic. Our logical formula will involve a free variable $horizontal$,
the subset of edges of the given graph (assumed to be of the form $G\boxtimes C_k$) that are horizontal in the product (that is, edges that connect two copies of the same vertex in $G$). 
We will also assume that $V$ and $E$ refer to the vertices and edges of the graph $G\boxtimes C_k$. In our logical formulas, we will express the type of each quantified variable (whether it is a vertex, edge, set of vertices, or set of edges) by annotating its quantifier with a membership or subset relation. For instance, ``$\forall x\in V:\dots$'' quantifies $x$ as a vertex variable.
We will express the incidence predicate between an edge $e$ and a vertex $v$ (true if $v$ is an endpoint of $e$, false otherwise) by $e\incident v$. Because our formulas will also use equality as a predicate, we will express the equality between names of formulas and their explicit logical formulation using a different symbol, $\equiv$.
In our formulas, predicates (equality, incidence, and adjacence) will be considered to bind more tightly than logical connectives, allowing us to omit parentheses in many cases.

A subgraph of the given graph may be represented by its set $S$ of edges. In this representation, adjacency between two vertices $a$ and $b$ may be expressed by the formula
\[
\adjacent(a,b,S)\equiv\exists e\in S: (e\incident a\wedge e\incident b).
\]

The following formula expresses the property that the neighbors of vertex $\ell$ in subgraph $S$ include at most one vertex from a set $X$:
\[
\leaf(\ell,X,S)\equiv
\forall c,d\in X:
\Bigl(\bigl(\adjacent(\ell,c,S)\wedge
\adjacent(\ell,d,S)\bigr)
\rightarrow c=d\Bigr).
\]
This allows us to express the acyclicity of a subgraph $S$ in terms of 1-degeneracy: every nonempty subset $X$ of vertices contains a leaf.
\[
\acyclic(S)\equiv
\forall X\subset V:
(\exists x\in X)\rightarrow\exists \ell\in X: \leaf(\ell,X,S).
\]
This already allows us to express the first condition of  \autoref{lem:prop}.
We will also use a predicate for whether two vertices $p$ and $q$ are connected by horizontal edges. This is true if for every subset $C$ of vertices containing $p$ and excluding $q$, there exists a horizontal edge, connecting a vertex of $C$ to a vertex not in $C$.  
\begin{multline*}
\alignedwith(p,q)\equiv
\forall C\subset V:
\bigl(p\in C\wedge\lnot(q\in C)\bigr)\rightarrow\\
\exists h\in \horizontal:\exists y,z\in V: \
\bigl(y\in C\wedge \lnot(z\in C)\wedge h\incident y\wedge h\incident z\bigr).
\end{multline*}
This allows us to express a predicate for the property that vertex $\ell$ is a leaf of subgraph $S$
on the same horizontal level as another vertex $v$ (that is, $\ell$ is the representative leaf for $v$'s level):
\[
\representative(v,\ell,S)\equiv
\leaf(\ell,V,S)\wedge\alignedwith(v,\ell).
\]
The second part of  \autoref{lem:prop} is that every level has exactly one representative leaf:
\begin{multline*}
\represented(S)\equiv
\bigl(
\forall v\in V: \exists \ell\in V: \representative(v,\ell,S)
\bigr) \wedge\\
\Bigl(
\forall v,\ell_1,\ell_2\in V: \bigl(\representative(v,\ell_1,S)\wedge\representative(v,\ell_2,S)\bigr)\\
\rightarrow \ell_1=\ell_2
\Bigr)
\end{multline*}
{\sc{Unlike}} for the previous formulas, there is no way of expressing the existence of a path of length $k$ from $u$ to $v$ in subgraph $S$, for a non-fixed $k$, in $\MSO_2$. We need a different formula $\haspath_k$ for each $k$. We do not require these paths to be simple, as this would only complicate the formula without simplifying our use of it. However it is essential for our application
to the third condition of \autoref{lem:prop} that we require our paths to have distinct endpoints.
\begin{multline*}
\haspath_k(u,v,S)\equiv
\exists w_1,w_2,\dots w_{k-1}\in V:
\exists e_1,e_2,\dots e_k\in S:\\
\lnot(u=v)\wedge
e_1\incident u \wedge
e_1\incident w_1 \wedge
e_2\incident w_1 \wedge \dots \wedge
e_k\incident w_{k-1} \wedge
e_k\incident v.
\end{multline*}
Other than the inequality of the two endpoints, this formula allows repetitions of vertices and edges within each path. In particular, it allows $w_i$ and $w_{i+1}$ to be equal to each other, repeating one endpoint of an edge twice and omitting the other endpoint. Because we allow repetitions in this way, this formulation of the $\haspath$ predicate has the following convenient property:
\begin{lemma}
For all $k\ge 1$ and all $u$, $v$, and $S$, we have that
\[
\haspath_k(u,v,S)\rightarrow\haspath_{k+1}(u,v,S).
\]
\end{lemma}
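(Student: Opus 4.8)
The plan is to prove the implication by a direct construction of witnesses. Assuming $\haspath_k(u,v,S)$ holds, I would extract the existentially quantified vertices $w_1,\dots,w_{k-1}$ and edges $e_1,\dots,e_k$ that witness it, and from them build witnesses $w'_1,\dots,w'_k$ and $e'_1,\dots,e'_{k+1}$ for $\haspath_{k+1}(u,v,S)$. Since the length-$k$ witness already describes a walk from $u$ to $v$ whose final edge $e_k$ is incident to both $w_{k-1}$ and $v$, the idea is to lengthen this walk by one step by having it ``stutter'' at its final vertex $v$.

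Concretely, I would set $w'_i=w_i$ and $e'_i=e_i$ for the indices they share (that is, $1\le i\le k-1$ for the vertices and $1\le i\le k$ for the edges), then set the single new intermediate vertex to $w'_k=v$ and the single new edge to $e'_{k+1}=e_k$. The verification then reduces to checking, literal by literal, that this assignment satisfies the quantifier-free body of $\haspath_{k+1}$. Every incidence literal inherited unchanged from $\haspath_k$ holds by hypothesis; the inequality $\lnot(u=v)$ is inherited directly; and the three literals involving the new objects, namely $e'_k\incident w'_k$, $e'_{k+1}\incident w'_k$, and $e'_{k+1}\incident v$, all collapse to the single fact $e_k\incident v$, which is already part of the $\haspath_k$ witness.

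The point to emphasize is that there is no real obstacle here, and this is precisely by design: the construction works only because $\haspath$ was defined to permit walks with repeated vertices and edges. The appended step degenerates to traversing $e_k$ and immediately reusing it, leaving the walk parked at $v$; this is exactly the ``repeating one endpoint of an edge twice'' behavior noted just after the definition. Were $\haspath$ instead required to describe a simple path, this padding would fail and monotonicity in $k$ would no longer follow by such an elementary argument. The only case meriting separate attention is the base $k=1$, where there are no intermediate vertices to carry over; there I would simply take $w'_1=v$ and $e'_1=e'_2=e_1$, and the same literal check goes through since $e_1\incident u$ and $e_1\incident v$ are given.
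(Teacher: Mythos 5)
Your construction is identical to the paper's proof: both pad the witnessing walk by setting $w_k=v$ and $e_{k+1}=e_k$, relying on the fact that $\haspath$ permits repeated vertices and edges. Your literal-by-literal verification and the explicit $k=1$ base case are just slightly more detailed write-ups of the same argument.
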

\begin{proof}
Let $w_1,\dots w_{k-1}$ and $e_1,\dots e_k$ be the vertices and edges witnessing the truth of $\haspath_k(u,v,S)$, let $w_k=v$, and let $e_{k+1}=e_k$.
Then $w_1,\dots, w_k$ and $e_1,\dots, e_{k+1}$ witness the truth of $\haspath_{k+1}(u,v,S)$.
\end{proof}
\begin{corollary}
Two vertices $u$ and $v$ of a subgraph $S$ of a given graph obey the predicate $\haspath_k(u,v,S)$ if and only if they are distinct and their distance in $S$ is at most $k$.
\end{corollary}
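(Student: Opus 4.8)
The plan is to prove the two directions of the biconditional separately, reading the definition of $\haspath_k(u,v,S)$ as asserting the existence of a length-$k$ walk from $u$ to $v$ in $S$ that is permitted to stand still at some steps. Throughout I would use the fact that the ambient graph (here $G\boxtimes C_k$) is simple, so that every edge has exactly two distinct endpoints.

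For the forward direction I would take a satisfying assignment of the witnesses $w_1,\dots,w_{k-1}$ and $e_1,\dots,e_k$ and rename $x_0=u$, $x_i=w_i$ for $1\le i\le k-1$, and $x_k=v$, so that each conjunct states that $e_i$ is incident to both $x_{i-1}$ and $x_i$. Since $e_i$ has exactly two distinct endpoints, for each $i$ either $x_{i-1}=x_i$ (the two incidence clauses name the same endpoint and the other endpoint is simply unused) or $x_{i-1}$ and $x_i$ are precisely the two endpoints of $e_i$ and hence adjacent in $S$. Thus $x_0,\dots,x_k$ is a walk of length at most $k$ from $u$ to $v$, and combined with the conjunct $\lnot(u=v)$ this yields both distinctness and $d_S(u,v)\le k$, since a walk of length at most $k$ forces the distance to be at most $k$.

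For the reverse direction I would assume $u\ne v$ and $d_S(u,v)=\ell\le k$ and take a shortest path $u=y_0,\dots,y_\ell=v$ with edges $f_1,\dots,f_\ell$. Setting $w_i=y_i$ and $e_i=f_i$ directly witnesses $\haspath_\ell(u,v,S)$, after which I would invoke the preceding monotonicity lemma $k-\ell$ times to conclude $\haspath_k(u,v,S)$; this is exactly the use the monotonicity lemma was designed for, since its proof pads a shorter witness by repeating the terminal vertex and edge.

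The one delicate point, and the only place where the permissive definition of $\haspath_k$ really matters, is the treatment of the degenerate steps $x_{i-1}=x_i$ in the forward direction: one must check that repeating a single endpoint produces a harmless stay-in-place step rather than a spurious adjacency, so that the witnessed object is a walk whose length merely \emph{bounds} the distance rather than equalling it. Once this reading is fixed, both directions are routine, and in particular no argument ever needs the witnessing walk to be a simple path.
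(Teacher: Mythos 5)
Your proof is correct and matches the paper's intended argument: the paper states this corollary without proof as an immediate consequence of the definition of $\haspath_k$ and the preceding monotonicity lemma, and your reverse direction uses that lemma exactly as it was designed (padding a shortest-path witness up to length $k$). Your careful handling of the degenerate steps $x_{i-1}=x_i$ in the forward direction is precisely the point the paper gestures at when it notes that the formula ``allows repetitions of vertices and edges within each path,'' so nothing here departs from the paper's route.
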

This allows us to express the final part of \autoref{lem:prop}, the requirement that each two vertices are connected by a non-horizontal edge if and only if their representatives are connected by a short path:
\begin{multline*}
\isroot_k(S)\equiv
\forall u,v\in V:
\Bigl(
\bigl(\exists u', v' \in V \ \exists e\in E: \alignedwith(u,u') \wedge \\ \alignedwith(v,v')  \wedge e\incident u'\wedge e\incident v'\wedge\lnot(e\in\horizontal)\bigr) 
\longleftrightarrow \\
\exists x,y\in V:
\bigl(\representative(u,x,S)\wedge
\representative(v,y,S)\wedge \\
\haspath_k(x,y,S)\bigr)
\Bigr).
\end{multline*}
\begin{lemma}
\label{lem:formula}
There exists an $\MSO_2$ formula
that is modeled by a graph $G \boxtimes C_k$
and its set $\horizontal$ of horizontal edges
exactly when $G \boxtimes C_k$ meets the conditions of
\autoref{lem:prop}.
\end{lemma}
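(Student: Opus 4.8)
The plan is to assemble the predicates developed above into a single existentially quantified formula and then verify, clause by clause, that it captures exactly the three conditions of \autoref{lem:prop}. Concretely, the formula I would propose is
\[
\exists S\subseteq E:\ \acyclic(S)\wedge\represented(S)\wedge\isroot_k(S),
\]
evaluated over the graph $G\boxtimes C_k$ with the edge set $\horizontal$ supplied as the single free variable. The existential quantifier ranges over edge sets, so $S$ plays the role of the candidate leaf root $T$ of \autoref{lem:prop}, represented by its edge set.

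The first thing I would check is that this is a syntactically legal $\MSO_2$ formula. Expanding each named predicate into its definition, every quantifier binds a vertex, an edge, a set of vertices, or a set of edges of $G\boxtimes C_k$, and the only free symbol that remains is $\horizontal$, an edge set whose value is fixed by the extra-logical process discussed before the theorem. Nesting the defined predicates inside one another introduces no new kind of quantification, so the whole expression stays within $\MSO_2$; the dependence on $k$ enters only through the fixed-length predicate $\haspath_k$, which is itself a legitimate (if $k$-specific) $\MSO_2$ formula.

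The bulk of the work is the semantic verification, which I would carry out by matching each conjunct to one property of \autoref{lem:prop}. The clause $\acyclic(S)$ asserts that every nonempty vertex subset contains a vertex with at most one $S$-neighbor inside that subset, which is exactly $1$-degeneracy of the subgraph with edge set $S$, hence Property~\ref{prop:forest}. The clause $\represented(S)$ asserts that each horizontal cycle has exactly one representative leaf, matching Property~\ref{prop:oneleaf}, once one confirms that $\alignedwith(v,\ell)$ holds precisely when $v$ and $\ell$ lie on the same horizontal cycle. The clause $\isroot_k(S)$ asserts the biconditional of Property~\ref{prop:adjacent}: here I would invoke the corollary above, that $\haspath_k(x,y,S)$ is equivalent to $x$ and $y$ being distinct and at distance at most $k$ in $S$, together with the observation that a non-horizontal edge exists between the horizontal cycles of $u$ and $v$ exactly when the corresponding vertices are adjacent in $G$.

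The main obstacle is confirming that the two subset-based encodings faithfully reproduce their intended graph notions. First, $\acyclic(S)$ encodes ``forest'' through the inductive $1$-degeneracy characterization, and I would verify that this standard equivalence is captured exactly by the formula. Second, and more delicately, $\alignedwith(p,q)$ encodes horizontal connectivity by a cut condition—every vertex subset $C$ containing $p$ but not $q$ is crossed by a horizontal edge—and I would confirm this is equivalent to $p$ and $q$ lying on the same horizontal cycle, using the fact that the horizontal edges partition $V$ into exactly the $|V(G)|$ horizontal cycles. I would then check the product-structure edge cases, namely vertices untouched by $S$ and the wrap-around behavior of cycles of length $k$, so that the ``exactly one representative leaf'' count in $\represented(S)$ and the alignment indirection in $\isroot_k(S)$ behave as \autoref{lem:prop} requires. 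Once these correspondences are established, the equivalence between the displayed formula being modeled and $G\boxtimes C_k$ satisfying \autoref{lem:prop} is immediate.
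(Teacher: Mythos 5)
Your formula is exactly the one the paper uses, and your clause-by-clause verification (matching $\acyclic$, $\represented$, and $\isroot_k$ to Properties~\ref{prop:forest}--\ref{prop:adjacent} of \autoref{lem:prop}) is the same argument the paper gives, only spelled out in more detail, including the checks of the $1$-degeneracy encoding and the cut-based $\alignedwith$ predicate that the paper leaves implicit in its earlier discussion. The proposal is correct and takes essentially the same approach as the paper's proof.
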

\begin{proof}
The formula is
\[
\exists S: \bigl(\acyclic(S)\wedge\represented(S)\wedge\isroot_k(S)\bigr).
\]
A subgraph defined by a set $S$ of its edges meets the first condition of the lemma if $\acyclic(S)$ is true, it meets the second condition of the lemma if $\represented(S)$ is true, and it meets the third condition of the lemma if $\isroot_k(S)$ is true.
\end{proof}
\begin{corollary}
\label{cor:mso}
The property of a graph $G$ being $k$-leaf power can be expressed as an $\MSO_2$
formula of $G\boxtimes C_k$ and of the set $\horizontal$ of horizontal edges of this graph product.
\end{corollary}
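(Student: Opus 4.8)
The plan is to compose the two preceding lemmas. \autoref{lem:prop} gives a purely combinatorial characterization: for a connected graph $G$ on three or more vertices, $G$ is a $k$-leaf power if and only if the product $G \boxtimes C_k$ contains a subgraph satisfying its three listed properties. \autoref{lem:formula} then supplies a single $\MSO_2$ formula that is modeled by the pair consisting of $G \boxtimes C_k$ and its set $\horizontal$ of horizontal edges exactly when those three properties are simultaneously satisfiable. Chaining these two equivalences immediately yields that the formula of \autoref{lem:formula} holds on $G \boxtimes C_k$ together with $\horizontal$ precisely when $G$ is a $k$-leaf power, which is the assertion of the corollary.

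I would first record that the edge-set $\horizontal$ enters the formula as a free variable rather than through quantification, being supplied as the fixed set of horizontal edges of the product. As noted earlier when discussing Courcelle's theorem, passing in a constant number of such free variables, whose values are determined by an extra-logical preprocessing step, is a standard and non-problematic extension; thus the formula genuinely depends only on the graph product together with this precomputed edge coloring, exactly as the statement requires.

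The only gap between the hypotheses of \autoref{lem:prop} and the scope of the corollary is that the lemma assumes $G$ connected and on at least three vertices, whereas the corollary speaks of an arbitrary graph $G$. To close this, I would use the facts that a graph is a $k$-leaf power if and only if each of its connected components is, and that both the product $G \boxtimes C_k$ and its horizontal-edge coloring decompose as disjoint unions over the components of $G$. Applying the formula componentwise therefore handles arbitrary $G$; the finitely many degenerate cases of components with one or two vertices can be absorbed into a disjunction with a bounded first-order test, since such a test examines only a constant number of vertices.

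I do not anticipate a genuine obstacle here: the substantive work has already been carried out in establishing the embedding (\autoref{lem:leaf-root-embedding}), the characterization (\autoref{lem:prop}), and the formula (\autoref{lem:formula}). The corollary is the final bookkeeping step that packages these into the single desired statement, and its proof amounts to a one-line composition of the two lemmas. The least routine part, and hence the one I would state explicitly rather than leave implicit, is the componentwise reduction removing the connectivity and size hypotheses of \autoref{lem:prop}; everything else is an immediate consequence of the equivalences already proved.
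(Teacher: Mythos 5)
Your proposal is correct and matches the paper, which offers no separate proof for this corollary precisely because it is the immediate composition of \autoref{lem:prop} and \autoref{lem:formula} that you describe. Your componentwise reduction for disconnected or tiny graphs is a sound piece of bookkeeping, though the paper disposes of it earlier by the standing assumption at the start of \autoref{sec:embedding} that each connected component (of at least three vertices) is handled independently, rather than folding those cases into the formula.
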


\section{Fixed-Parameter Tractability of Leaf Powers}
\label{sec:courcelle}
In this section, by using Courcelle's theorem, we provide our main result that recognizing $k$-leaf powers is fixed-parameter tractable when parameterized by $k$ and the degeneracy of the input graph.  

In order to apply Courcelle's theorem to the graph product $G\boxtimes C_k$ we need to bound its treewidth.
\begin{lemma}
\label{lem:bounded-treewidth}
If $G$ has treewidth $t$ and H has a bounded number of vertices $s$ then $G \boxtimes H$ has treewidth at most $s(t+1)-1$.
\end{lemma}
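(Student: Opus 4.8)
The plan is to construct a tree decomposition of $G \boxtimes H$ directly from a tree decomposition of $G$ of width $t$, by ``expanding'' each bag to account for all $s$ copies that the product with $H$ creates. The key observation is that the vertices of $G \boxtimes H$ are pairs $(v, w)$ with $v \in V(G)$ and $w \in V(H)$, and every edge of the product projects (under the map $(v,w) \mapsto v$) to either an edge of $G$ or a single vertex of $G$. This means the ``$G$-coordinate'' of any product edge lives inside some bag of the given decomposition of $G$, so if we keep, for each vertex $v$ appearing in a $G$-bag, all $s$ of its copies $\{(v,w) : w \in V(H)\}$, then both endpoints of any product edge will co-occur in some bag.

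\medskip

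\noindent\textbf{Construction.} Let $(T, \{X_i\}_{i \in T})$ be a tree decomposition of $G$ of width $t$, so each bag $X_i \subseteq V(G)$ has $|X_i| \le t+1$. I would use the \emph{same} tree $T$ and define, for each node $i$, the expanded bag
\[
Y_i \;=\; \bigl\{ (v,w) : v \in X_i,\ w \in V(H) \bigr\} \;=\; X_i \times V(H).
\]
Since $|X_i| \le t+1$ and $|V(H)| = s$, we have $|Y_i| \le s(t+1)$, giving width at most $s(t+1)-1$, which is exactly the claimed bound. It then remains to verify that $(T, \{Y_i\})$ is a valid tree decomposition of $G \boxtimes H$.

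\medskip

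\noindent\textbf{Verification of the two axioms.} For the \emph{edge-coverage} property, take any edge of $G \boxtimes H$ joining $(u_1,u_2)$ and $(v_1,v_2)$. By the definition of the strong product, either $u_1 = v_1$, or $u_1$ and $v_1$ are adjacent in $G$. In the first case $u_1 = v_1$ appears in some bag $X_i$ of the $G$-decomposition; in the second case some bag $X_i$ contains both $u_1$ and $v_1$ (edge-coverage in $G$). Either way, the corresponding $Y_i$ contains all product-vertices whose first coordinate is $u_1$ or $v_1$, hence both $(u_1,u_2)$ and $(v_1,v_2)$. For the \emph{connectivity} property, fix a product-vertex $(v,w)$; the bags $Y_i$ containing it are precisely those for which $v \in X_i$, and these form a connected subtree of $T$ by the connectivity property of the original decomposition of $G$. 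Both axioms therefore follow directly from the corresponding properties downstairs in $G$.

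\medskip

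\noindent\textbf{Main obstacle.} The construction itself is routine; the only real point requiring care is the edge-coverage check, and specifically making sure the argument covers \emph{all three} edge types of the strong product (vertical, horizontal, and diagonal, in the terminology introduced earlier). Vertical and horizontal edges keep one coordinate fixed, so they reduce immediately to a single vertex of $G$ lying in a bag; diagonal edges require $u_1$ and $v_1$ to be genuinely adjacent in $G$, which is where the edge-coverage property of the original decomposition is invoked. Because we retain the full fiber $X_i \times V(H)$ over each $G$-bag rather than selectively choosing copies, the $H$-coordinate never constrains which bag we land in, so no case analysis on $H$'s structure is needed — this is what keeps the bound clean at $s(t+1)-1$.
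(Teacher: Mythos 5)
Your proposal is correct and follows essentially the same approach as the paper: the same tree with each bag $X_i$ expanded to $X_i \times V(H)$, giving bags of size at most $s(t+1)$ and hence width $s(t+1)-1$. The paper states this construction in three sentences and leaves the axiom-checking implicit; your verification of edge coverage (across all three edge types) and connectivity simply makes explicit what the paper takes for granted.
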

\begin{proof}
Given any tree-decomposition of $G$ with width $t$, we can form a decomposition of $G\boxtimes H$ by using the same tree, and placing each vertex $(v,w)$ of $G\boxtimes H$ (where $v$ and $w$ are vertices of $G$ and $H$ respectively) into the same bag as vertex $v$ of $G$.
The size of the largest bag of the tree-decomposition of $G$ is $t+1$, so the size of the largest bag of the resulting tree-decomposition of the graph product is $s(t+1)$. The treewidth is one less than the size of the largest bag.
\end{proof} 
\begin{corollary}
If $G$ has a bounded treewidth and $k$ is bounded, then $ G\boxtimes C_k$  also has bounded treewidth.
\label{cor:bounded_tw}
\end{corollary}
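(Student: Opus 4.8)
The plan is to obtain this statement as an immediate instance of \autoref{lem:bounded-treewidth}. I would apply that lemma with $H = C_k$, observing that the cycle graph $C_k$ has exactly $k$ vertices, so the parameter $s$ in the lemma equals $k$. Writing $t$ for the treewidth of $G$, the lemma then yields directly that $G \boxtimes C_k$ has treewidth at most $k(t+1)-1$.

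To finish, I would note that when both $t$ and $k$ are bounded by constants, the quantity $k(t+1)-1$ is itself bounded by a constant, and hence $G \boxtimes C_k$ has bounded treewidth. There is essentially no obstacle here, since the substantive argument---expanding each bag of a width-$t$ tree decomposition of $G$ into its product with all of $V(H)$, thereby multiplying bag sizes by the fixed factor $s$---was already carried out in the proof of \autoref{lem:bounded-treewidth}. The only point to check is that $C_k$ contributes a bounded vertex count, which holds by definition precisely when $k$ is bounded; the result then follows by specialization.
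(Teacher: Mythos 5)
Your proposal is correct and matches the paper's intent exactly: the corollary is stated immediately after \autoref{lem:bounded-treewidth} with no separate proof, precisely because it follows by specializing the lemma to $H = C_k$ with $s = k$, giving treewidth at most $k(t+1)-1$, which is bounded when $k$ and $t$ are. Nothing is missing from your argument.
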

This gives us our main theorem:
\begin{theorem}
\label{thm:1}
 For fixed constants $k$ and $d$, it is possible to recognize in linear time (with fixed-parameter tractable dependence on $k$ and $d$) whether a graph of degeneracy at most $d$ is a $k$-leaf power.
 \end{theorem}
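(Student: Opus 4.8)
The plan is to assemble the pieces already established into a pipeline: reduce recognition of $k$-leaf powers to testing the $\MSO_2$ formula of \autoref{cor:mso} on the product $G\boxtimes C_k$ by way of Courcelle's theorem, after first arranging that this product has bounded treewidth. As a preprocessing step I would split $G$ into its connected components and treat each separately, since a graph is a $k$-leaf power exactly when every one of its components is, and components with at most two vertices are trivially $k$-leaf powers. This also aligns the input with the connectivity hypothesis required by \autoref{lem:prop} and the embedding of \autoref{lem:leaf-root-embedding}. For each component $C$ on three or more vertices, the algorithm then runs the following steps, and $G$ is reported to be a $k$-leaf power precisely when all components pass.

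The crucial preliminary step for a component $C$ is to test whether $C$ is chordal, which takes linear time. Since every leaf power is chordal, a failed test lets us reject immediately. If $C$ is chordal, then as recorded in \autoref{subsec:parameters} its treewidth equals its degeneracy, which is at most $d$ because degeneracy does not increase when passing to an induced subgraph; moreover a perfect elimination ordering yields, in linear time, a tree decomposition of $C$ of width at most $d$. I would then construct $C\boxtimes C_k$, which has $O(k\,|V(C)|)$ vertices and $O(dk\,|V(C)|)$ edges and is built in that much time, and apply the construction in the proof of \autoref{lem:bounded-treewidth} to convert the tree decomposition of $C$ into one of $C\boxtimes C_k$ of width at most $k(d+1)-1$, as promised by \autoref{cor:bounded_tw}.

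Finally I would invoke Courcelle's theorem on $C\boxtimes C_k$, supplying the bounded-width tree decomposition just computed, the $\MSO_2$ formula of \autoref{cor:mso}, and the value of the free variable $\horizontal$ set equal to the set of horizontal edges of the product (this use of a single extra-logically assigned free edge set is the standard, harmless extension of the theorem noted earlier). Courcelle's theorem evaluates the formula in time linear in the size of $C\boxtimes C_k$ with a multiplicative factor depending only on the treewidth bound $k(d+1)-1$ and on the size of the formula, the latter growing with $k$ through $\haspath_k$ but independent of $|V(C)|$. By \autoref{lem:prop} the formula holds exactly when $C$ is a $k$-leaf power, so summing the costs over all components gives a total running time linear in the size of $G$ with all non-polynomial dependence confined to a function of $k$ and $d$.

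The step I expect to be the main obstacle is justifying the treewidth bound, and it is exactly here that chordality is indispensable: bounded degeneracy alone does \emph{not} imply bounded treewidth (the $n\times n$ grid has degeneracy two yet treewidth $\Omega(n)$), so Courcelle's theorem cannot be applied blindly to an arbitrary degeneracy-$d$ product. The equality of treewidth and degeneracy is available only because leaf powers are chordal, so the chordality test does double duty, rejecting non-leaf-powers early and certifying on the surviving instances the treewidth bound that makes the rest of the pipeline run in the claimed time. A secondary point to check carefully is that the conversion of \autoref{lem:bounded-treewidth} preserves linear running time and that the $k$-dependent growth of the formula is absorbed into the fixed-parameter factor rather than the polynomial factor in $n$.
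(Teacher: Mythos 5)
Your proposal is correct and takes essentially the same route as the paper's own proof: bound the treewidth of $G\boxtimes C_k$ via the chordality of leaf powers (\autoref{subsec:parameters}) together with \autoref{lem:bounded-treewidth}, then apply Courcelle's theorem to the $\MSO_2$ formula of \autoref{cor:mso} with $\horizontal$ supplied as a free edge-set variable. You are in fact more careful than the paper's one-line proof, which leaves implicit both the split into connected components and the preliminary chordality test that is needed to certify the treewidth bound on inputs that turn out not to be leaf powers.
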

 
\begin{proof}
As stated earlier in \autoref{subsec:parameters}, leaf powers with bounded degeneracy have bounded treewidth and it follows from \autoref{cor:bounded_tw} that $G \boxtimes C_k$ also has bounded treewidth. Therefore, by applying Courcelle's theorem to the $\MSO_2$ formula of \autoref{cor:mso} we obtain the result.
\end{proof}

\section{Edges Labeled by Distance Ranges}
\label{sec:generalize}
It is perhaps of interest to generalize $k$-leaf powers to a more general version in which each edge of the input graph $G$ has a weight range $[k_1,k_2]$ where $2 \leq k_1 \leq k_2$ and $K$ is the upper bound on $k_2$ over all the edges. We say that $G$ is a \emph{labeled $K$-leaf power} if $G$ has a $K$-leaf root $T$ in which, for each edge $uv$ of $G$, the corresponding leaves of $T$ are at a distance that is within the range used to label edge $uv$. As with the unlabeled version of the problem, for non-adjacent pairs of vertices of $G$, the corresponding leaves should be at distance more than $K$. The original $k$-leaf power is a restricted variant of this general version in which all edges have a fixed weight range $[1,k]$ and $K = k$. 

One motivation for this comes from the phylogenetic tree applications of $k$-leaf powers.
If we know some information about the evolutionary distance between species, and wish to reconstruct the evolutionary tree, the information we know may be more fine-grained than merely that the distance is big or small. The ranges on each edge allow us to model this fine-grained information and by doing so restrict the trees that can be generated to more accurately reflect the data. As we show in this section, our parameterized algorithms can be extended to the more general problem of recognizing labeled $K$-leaf powers.

Recall that we are already modeling some labeling information on the graph product $G_1 \boxtimes G_2$, in the logic of graphs, as the free set variable $\horizontal$.
We will similarly need to model the edge weight range labels logically.
To do so, we extend the weights on the edges of $G$ to the weights on the edges of a graph product using the following definition.
Suppose that we are considering the graph product $G_1 \boxtimes G_2$ where $G_1$ and $G_2$ are weighted and unweighted, respectively. Recall that, in this product, two distinct vertices $(u_1,u_2)$ and $(v_1,v_2)$ are adjacent if and only if for all $i \in \{1,2\}$,  $u_i = v_i$ or $u_i$ and $v_i$ are adjacent in $G_i$. A vertical or diagonal edge is an edge with endpoints  $(u_1,u_2)$ and $(v_1,v_2)$, for which $u_1\ne v_1$. In this case, we assign the vertical or diagonal edge weight $\omega$ if  the edge connecting $u_1$ and $v_1$ has weight $\omega$, in $G_1$.

We have the following analogue of \autoref{lem:leaf-root-embedding} for the weighted case:

\begin{lemma}
\label{lem:general-leaf-root-embedding}
If $G$ is a weighted connected $K$-leaf power graph on three or more vertices, and $T$ is any $K$-leaf root of $G$, then $T$
can be embedded as a subtree of the strong product $G  \boxtimes C_{K}$.  
Additionally, the embedding can be chosen in such a way that each horizontal cycle in the strong product (the product of a vertex $v$ of $G$ with $C_{K}$) contains exactly one leaf of the embedded copy of $T$, the leaf representing $v$.
\end{lemma}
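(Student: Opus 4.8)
The plan is to mirror the proof of \autoref{lem:leaf-root-embedding} almost verbatim, with $K$ playing the role that $k$ played there, and to isolate the one inference that genuinely uses the weighted definition. First I would check that the preliminary reductions at the start of \autoref{sec:embedding} carry over unchanged. Since $G$ is connected we may again assume $T$ is a tree in which every vertex and edge lies on some path between two leaves realizing an edge of $G$, rooted at an interior node: if a non-participating edge existed, deleting it would split the leaves into two nonempty groups, and connectivity of $G$ would force an edge of $G$ between the groups whose realizing path crosses the deleted edge, a contradiction. The only property of $k$ the original embedding used is that each realizing path has length at most $k$; in the weighted setting each edge $uv$ of $G$ is realized by a path of length in its range $[k_1,k_2]\subseteq[2,K]$, so every vertex and edge of $T$ still lies on a realizing path of length at most $K$. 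I would then label each internal node of $T$ by its closest leaf, breaking ties among the children's labels exactly as before, so that same-labeled nodes form connected paths of length at most $K-1$.

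The embedding itself is then identical: map each vertex $u$ of $T$ to the pair $(v,i)$, where $v$ is the label of $u$ and $i$ is its depth modulo~$K$, a vertex of $G\boxtimes C_{K}$. Injectivity transfers directly, since every same-labeled path has length at most $K-1$ and so cannot wrap around the cycle once depths are reduced modulo~$K$, forcing distinct nodes to distinct images. For adjacency preservation, the second-coordinate condition is automatic because adjacent nodes of $T$ differ in depth by one and hence map to adjacent vertices of $C_K$. For the first coordinate I would reuse the closest-leaf argument: if a child $u_1$ and its parent $u_2$ carry different labels $v_1$, $v_2$, then the edge $u_1u_2$ lies on a realizing path of length at most $K$, and the leaves representing $v_1$ and $v_2$ are no farther apart than the endpoints of that path, so they lie at distance at most $K$ in $T$.

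The one step that genuinely needs the weighted definition is the final inference that distance at most $K$ between two leaves forces the corresponding vertices of $G$ to be adjacent. In \autoref{lem:leaf-root-embedding} this was immediate from the threshold definition, whereas here adjacency is governed by a range $[k_1,k_2]$. I would therefore argue from the contrapositive built into the labeled definition: non-adjacent pairs of $G$ are required to have their leaves at distance strictly greater than $K$, while every adjacent pair has distance at most $k_2\le K$. Hence ``distance at most $K$'' separates edges from non-edges exactly, so $v_1$ and $v_2$ are adjacent in $G$, giving the required first-coordinate adjacency. I expect this is the only substantive deviation from the unweighted proof, and it is a short case check rather than a real obstacle; note in particular that the weights assigned to product edges earlier in this section play no role in the present lemma and instead enter only in the weighted analogue of \autoref{lem:prop}. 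The concluding property that each horizontal cycle contains exactly one leaf follows as before, since each leaf of $T$ is labeled by the single vertex of $G$ it represents.
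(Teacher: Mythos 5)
Your proof is correct, but it takes a much longer route than the paper, which disposes of this lemma in two sentences: because every edge label satisfies $[k_1,k_2]\subseteq[2,K]$ and non-adjacent pairs are required to have leaf distance greater than $K$, a weighted $K$-leaf root of $G$ is by definition also an (unweighted) $K$-leaf root of $G$; and since the weighted product has the same underlying graph as $G\boxtimes C_{K}$, \autoref{lem:leaf-root-embedding} applies verbatim as a black box. The dichotomy you isolate as the ``one substantive deviation''---adjacent pairs at leaf distance at most $k_2\le K$, non-adjacent pairs at distance strictly greater than $K$---is precisely the content of that reduction; once you have observed it, the hypotheses of \autoref{lem:leaf-root-embedding} are already satisfied, so there is no need to replay the closest-leaf labeling, the injectivity argument, or the adjacency-preservation argument. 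Your replay is faithful and introduces no errors (in particular, you correctly note that the weights placed on product edges are irrelevant here and enter only in the weighted analogue of \autoref{lem:prop}, matching the paper's remark that the weighted product has the same underlying graph), so what your version buys is a self-contained argument that makes explicit exactly where the distance ranges do and do not matter, at the cost of duplicating an argument the paper reuses wholesale.
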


\begin{proof}
The weighted graph product has the same underlying graph as the unweighted product,
and the weighted $K$-leaf root is a special case of the unweighted $K$-leaf root,
so this follows immediately from \autoref{lem:leaf-root-embedding}, which provides an embedding into the graph power of every $K$-leaf root.
\end{proof}

We can now provide the following characterization of $K$-leaf powers.

\begin{lemma}
\label{lem:general-prop}
A given connected weighted graph $G$ on three or more vertices is a $K$-leaf power if and only if the product $G\boxtimes C_{K}$
has a subgraph $T$ with the following properties:
\begin{enumerate}
\item\label{general-prop:forest}
$T$ is $1$-degenerate (i.e., a forest).
\item\label{general-prop:oneleaf}
Every vertex of $G\boxtimes C_{K}$ is connected by horizontal edges of the product to exactly one leaf of $T$.
\item\label{general-prop:adjacent}
If two vertices of $G\boxtimes C_{K}$ are the endpoints of a non-horizontal edge of the product with weight $[k_1,k_2]$
then the corresponding leaves of $T$ (given according to Property~\ref{general-prop:oneleaf})
are the distinct endpoints of a path of length at least $k_1$ and at most $k_2$ in $T$.
\item\label{general-prop:nonadjacent}
If two distinct leaves of $T$ are at distance at most $K$ then there exists a non-horizontal edge of the product with two endpoints vertices, aligned to each leaf.
\end{enumerate}
\end{lemma}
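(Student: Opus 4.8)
The plan is to mirror the two-directional argument used for \autoref{lem:prop}, substituting the weighted embedding of \autoref{lem:general-leaf-root-embedding} wherever that proof invoked the unweighted \autoref{lem:leaf-root-embedding}. The one structural change from the unweighted statement is that the single biconditional of its third property has been split here into two one-directional statements, Property~\ref{general-prop:adjacent} and Property~\ref{general-prop:nonadjacent}. Accordingly, the substance of the proof is to check that these two implications together encode exactly the defining conditions of a labeled $K$-leaf root: that every edge $uv$ of $G$ labeled $[k_1,k_2]$ has its two representative leaves at a distance lying in $[k_1,k_2]$, and that every non-adjacent pair has its representative leaves at distance greater than $K$.

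For the reverse direction, I would assume that $G\boxtimes C_K$ contains a subgraph $T$ obeying Properties~\ref{general-prop:forest}--\ref{general-prop:nonadjacent} and argue that $T$ is a labeled $K$-leaf root. Property~\ref{general-prop:forest} makes $T$ a forest, and Property~\ref{general-prop:oneleaf} places its leaves in bijection with the vertices of $G$, since the horizontal cycles of the product correspond one-to-one with vertices of $G$. Property~\ref{general-prop:adjacent} then directly supplies the edge constraints, putting the leaves of each labeled edge at a distance inside its range $[k_1,k_2]\subseteq[2,K]$. The non-edge constraints come from the contrapositive of Property~\ref{general-prop:nonadjacent}: if two vertices of $G$ are non-adjacent, no non-horizontal product edge aligns to the corresponding leaves, so those leaves cannot be at distance at most $K$ and must therefore be at distance greater than $K$. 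Hence $T$ satisfies every distance requirement of the labeled instance, certifying that $G$ is a labeled $K$-leaf power.

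For the forward direction, I would take an arbitrary labeled $K$-leaf root $T$ of $G$ and embed it into $G\boxtimes C_K$ using \autoref{lem:general-leaf-root-embedding}; the embedding is a subtree, giving Property~\ref{general-prop:forest}, and places exactly one leaf on each horizontal cycle, giving Property~\ref{general-prop:oneleaf}. Property~\ref{general-prop:adjacent} holds because, by the definition of a labeled root, the leaves of each labeled edge already sit at a distance within that edge's range, and these ranges transfer to the non-horizontal product edges by the weighting convention introduced before the lemma. Property~\ref{general-prop:nonadjacent} holds because, again by definition, every non-edge has its leaves at distance greater than $K$, so any pair of leaves at distance at most $K$ must arise from an adjacent pair of $G$ and therefore has an aligned non-horizontal product edge.

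I expect the main difficulty to be bookkeeping rather than conceptual: keeping the direction of each implication straight once the unweighted biconditional has been separated, and confirming that the two halves neither overlap destructively nor leave a gap. Concretely, one should note that a pair that is itself an edge automatically satisfies the consequent of Property~\ref{general-prop:nonadjacent}, so the two properties never conflict, and that every label range obeys $k_2\le K$, which is what lets the uniform threshold $K$ in Property~\ref{general-prop:nonadjacent} coexist with the per-edge ranges enforced by Property~\ref{general-prop:adjacent}.
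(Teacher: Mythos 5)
Your proposal is correct and takes essentially the same route as the paper, which simply states that the argument follows the proof of \autoref{lem:prop} modified to account for edge weights; your two-directional argument via \autoref{lem:general-leaf-root-embedding}, with the biconditional split into Properties~\ref{general-prop:adjacent} and \ref{general-prop:nonadjacent}, is exactly that adaptation spelled out in full.
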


\begin{proof}
The proof follows the same lines as the proof of  \autoref{lem:prop}, modified only to take into account the edge weights.
\end{proof}

In order to express the components of  \autoref{lem:general-prop} in monadic second-order logic, we reuse formulas \acyclic\ and \represented \ from  \autoref{lem:prop} for the first and second parts of   \autoref{lem:general-prop}, respectively. 

To express the third part, we introduce $K^2$ edges sets $I_{k_1,k_2}$ where $2 \leq k_1 \leq k_2 \leq K$. An edge $e$ of the product, with two endpoints $(u_1,u_2)$ and $(v_1,v_2)$ belongs to $I_{k_1,k_2}$ if and only if $u_1 \ne u_2$, $v_1 \ne v_2$ and it has weight $[k_1,k_2]$. This allows us the express the requirement that if two vertices are connected by a non-horizontal edge with weight $[k_1,k_2]$ then their representatives are connected by a path with a length in the range $[k_1,k_2]$:

\begin{multline*}
\edge_{k_1,k_2}(S)\equiv
\forall u,v\in V:
\Bigl(
\bigl(\exists e\in E: e\incident u\wedge e\incident v\wedge (e\in I_{k_1,k_2})\bigr) \\
\longrightarrow
\exists x,y\in V:
\bigl(\representative(u,x,S)\wedge
\representative(v,y,S)\wedge \\
\haspath_{k_2}(x,y,S) \wedge \neg \haspath_{k_1-1}(x,y,S) \bigr)
\Bigr).
\end{multline*}

The last part of  \autoref{lem:general-prop} can be expressed as follows:

\begin{multline*}
\nonedge_{K}(S)\equiv
\forall u,v\in V:
\Bigl( \bigl( \exists x,y \in V: \representative(u,x,S) \wedge \\ \representative(y,v,S)\wedge \haspath_{K}(x,y,S) \bigr) \longrightarrow \bigl(\exists u',v' \in V \  \exists e \in E: e\incident u' \\ \wedge e\incident v' \wedge \alignedwith(u,u') \wedge  \alignedwith(v,v') \wedge \neg (e \in \horizontal)\bigr) \Bigr)
\end{multline*}

\begin{lemma}
\label{lem:general-formula}
There exists an $\MSO_2$ formula
that is modeled by a graph $G \boxtimes C_{K}$
and its set $\horizontal$ of horizontal edges and $K^2$ edge sets $I_{k_1,k_2}$
exactly when $G \boxtimes C_{K}$ meets the conditions of
\autoref{lem:general-prop}.
\end{lemma}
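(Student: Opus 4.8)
The plan is to follow the template of \autoref{lem:formula} exactly, assembling the previously defined subformulas into a single existentially quantified sentence over an edge set $S$ that represents a candidate $K$-leaf root. Concretely, I would take the $\MSO_2$ formula
\[
\exists S:\Bigl(\acyclic(S)\wedge\represented(S)\wedge\Bigl(\bigwedge_{2\le k_1\le k_2\le K}\edge_{k_1,k_2}(S)\Bigr)\wedge\nonedge_{K}(S)\Bigr),
\]
interpreted over the graph $G\boxtimes C_K$ together with the free set variables \horizontal{} and the collection of edge sets $I_{k_1,k_2}$ that are referenced inside the \edge\ subformulas. As in the unweighted case, these free set variables are not quantified within the logic; they are computed by the extra-logical preprocessing step that constructs the product and its edge coloring, which the paper has already noted is standard and unproblematic.

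The correctness argument then reduces to checking that each conjunct captures exactly the corresponding clause of \autoref{lem:general-prop}. Here I would reuse the verifications from \autoref{lem:formula}: $\acyclic(S)$ expresses Property~\ref{general-prop:forest} (that $S$ is a forest, via $1$-degeneracy), and $\represented(S)$ expresses Property~\ref{general-prop:oneleaf} (that every horizontal level has exactly one representative leaf). For Property~\ref{general-prop:adjacent}, the weighted edge condition, the single predicate $\isroot_k$ of the unweighted case is replaced by a conjunction of the $\edge_{k_1,k_2}$ predicates, one for each admissible weight range; each such predicate forces that whenever two product vertices are joined by a non-horizontal edge lying in $I_{k_1,k_2}$, their representative leaves are joined in $S$ by a path whose length lies in $[k_1,k_2]$. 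Finally, $\nonedge_{K}(S)$ expresses Property~\ref{general-prop:nonadjacent}, the converse direction forbidding short paths in $S$ between the representative leaves of non-adjacent product vertices.

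The crucial point I would emphasize, and the only place where genuine care is needed, is that the conjunction over weight ranges is a legitimate $\MSO_2$ formula: because $K$ is a fixed parameter, the index set $\{(k_1,k_2):2\le k_1\le k_2\le K\}$ is finite (at most $K^2$ pairs), so $\bigwedge_{k_1,k_2}\edge_{k_1,k_2}(S)$ abbreviates a finite conjunction and stays within the logic. This is the weighted analogue of the observation, stressed just before the definition of $\haspath_k$, that one cannot express a path of non-fixed length in $\MSO_2$ but can write a separate formula for each fixed length. The remaining verification, that $\edge_{k_1,k_2}$ correctly pins the path length to the range $[k_1,k_2]$, follows from the corollary that $\haspath_k(x,y,S)$ holds exactly when $x$ and $y$ are distinct and at distance at most $k$ in $S$: combining $\haspath_{k_2}(x,y,S)$ with $\neg\,\haspath_{k_1-1}(x,y,S)$ expresses distance at most $k_2$ and greater than $k_1-1$, i.e.\ a distance in $[k_1,k_2]$, as required.
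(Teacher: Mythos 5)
Your proposal is correct and matches the paper's own proof essentially verbatim: the paper exhibits the same formula $\exists S:\bigl(\acyclic(S)\wedge\represented(S)\wedge\edge_{2,2}(S)\wedge\dots\wedge\edge_{K,K}(S)\wedge\nonedge_{K}(S)\bigr)$ and verifies each conjunct against the corresponding condition of \autoref{lem:general-prop} just as you do. Your added remarks --- that the conjunction over weight ranges is finite since $K$ is fixed, and that $\haspath_{k_2}\wedge\neg\haspath_{k_1-1}$ pins the distance to $[k_1,k_2]$ --- are sound elaborations of points the paper leaves implicit.
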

\begin{proof}
The formula is
\begin{multline*}
\exists S: \bigl(\acyclic(S)\wedge\represented(S)\wedge\edge_{2,2}(S) \wedge \edge_{2,3}(S) \wedge  \dots  \wedge \\ \edge_{K,K}(S)  \wedge \nonedge_{K}(S)\bigr).
\end{multline*}
A subgraph defined by a set $S$ of its edges meets the first condition of the  \autoref{lem:general-prop} if $\acyclic(S)$ is true, it meets the second condition of the lemma if $\represented(S)$ is true, it meets the third condition of the lemma if $\edge_{k_1,k_2}(S)$ is true for all $2 \leq k_1 \leq k_2 \leq K$, and it meets the forth condition of the lemma if $\nonedge(S)$ is true.
\end{proof}
\begin{corollary}
\label{cor:general-mso}
The property of a weighted graph $G$ being $K$-leaf power can be expressed as an $\MSO_2$
formula of $G\boxtimes C_{K}$, of the set $\horizontal$ of horizontal edges and of the $K^2$ edge sets $I_{k_1,k_2}$ of this graph product.
\end{corollary}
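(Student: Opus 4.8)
The plan is to obtain this corollary by directly composing the structural characterization of \autoref{lem:general-prop} with the logical formula constructed in \autoref{lem:general-formula}, exactly as \autoref{cor:mso} was derived from \autoref{lem:prop} and \autoref{lem:formula} in the unweighted setting. \autoref{lem:general-prop} tells us that a connected weighted graph $G$ on three or more vertices is a $K$-leaf power if and only if the product $G\boxtimes C_K$ contains a subgraph $T$ satisfying its four listed properties. \autoref{lem:general-formula} exhibits a single $\MSO_2$ formula, namely $\exists S:\bigl(\acyclic(S)\wedge\represented(S)\wedge\edge_{2,2}(S)\wedge\dots\wedge\edge_{K,K}(S)\wedge\nonedge_{K}(S)\bigr)$, that is modeled by $G\boxtimes C_K$ together with the free set variables $\horizontal$ and the sets $I_{k_1,k_2}$ precisely when those four conditions hold. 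Chaining the two equivalences then yields that $G$ is a $K$-leaf power if and only if this formula is satisfied, which is exactly the expressibility claimed.

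First I would recall, as in the introductory discussion of Courcelle's theorem, that the auxiliary edge sets $\horizontal$ and $I_{k_1,k_2}$ enter only as free variables whose values are fixed by an extra-logical preprocessing step: $\horizontal$ is the set of horizontal edges of the product, and each $I_{k_1,k_2}$ is the set of non-horizontal edges inheriting the weight range $[k_1,k_2]$ from the corresponding edge of $G$ under the weight extension defined in this section. Since these assignments are determined entirely by the weighted structure of $G$ and its embedding into $G\boxtimes C_K$, the resulting object is a well-defined logical structure, and the standard extension of $\MSO_2$ to a constant number of free set variables applies without difficulty.

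The only point that requires care, and the one I would single out as the main obstacle, is ensuring that the number of auxiliary variables remains bounded so that the formula has size depending only on $K$. Because the weight ranges satisfy $2\le k_1\le k_2\le K$, there are at most $K^2$ distinct sets $I_{k_1,k_2}$, and correspondingly at most $K^2$ conjuncts $\edge_{k_1,k_2}$ in the formula; thus for each fixed $K$ the formula is a finite $\MSO_2$ sentence over the expanded vocabulary. With the finiteness of the family $\{I_{k_1,k_2}\}$ and the legitimacy of the free-variable assignment established, the remaining argument is the same substitution of equivalences used for \autoref{cor:mso}, so I would expect the proof to be essentially immediate.
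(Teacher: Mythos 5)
Your proposal is correct and matches the paper's (implicit, immediate) derivation: the corollary follows directly by chaining the characterization of \autoref{lem:general-prop} with the formula of \autoref{lem:general-formula}, with $\horizontal$ and the at most $K^2$ sets $I_{k_1,k_2}$ treated as extra-logically assigned free variables exactly as the paper does for \autoref{cor:mso}. Your added remarks on the finiteness of the family $\{I_{k_1,k_2}\}$ and the legitimacy of free-variable assignment are sound but not a departure from the paper's route.
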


As proved in \autoref{lem:bounded-treewidth}, if $G$ has a bounded treewidth and $K$ is fixed, then $G \boxtimes C_{K}$ also has a bounded treewidth. This fact enables us to provide the following theorem for the general leaf power problem.
\begin{theorem}
 For fixed constants $K$ and $d$, it is possible to recognize in linear time (with fixed-parameter tractable dependence on $K$ and $d$) whether a graph of degeneracy at most $d$ is a $K$-leaf power.
 \end{theorem}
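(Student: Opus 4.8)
The plan is to follow the same route as the proof of \autoref{thm:1}, replacing the unlabeled characterization by \autoref{lem:general-prop} and the unlabeled formula by the $\MSO_2$ formula of \autoref{cor:general-mso}. Concretely, given an input graph $G$ of degeneracy at most $d$, I would build the product $G \boxtimes C_{K}$, equip it with its free set variables (the horizontal edge set $\horizontal$ together with the $K^2$ edge sets $I_{k_1,k_2}$, all computed by a simple extra-logical pass over the product as discussed in \autoref{subsec:parameters}), and then test the formula of \autoref{cor:general-mso} using Courcelle's theorem. Since there are only $K^2$ additional free variables, a constant for fixed $K$, the standard extension of Courcelle's theorem to constantly many free variables applies.

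The one ingredient that does not transfer verbatim from \autoref{thm:1} is the treewidth bound on $G \boxtimes C_{K}$, so the first substantive step I would carry out is to re-establish it. The key observation is that a labeled $K$-leaf power is, as an abstract graph, also an ordinary $K$-leaf power: every label range $[k_1,k_2]$ satisfies $2 \le k_1 \le k_2 \le K$, so in any realizing tree $T$ a pair of leaves is adjacent exactly when its tree distance is at most $K$, which is precisely the unlabeled $K$-leaf power condition. Hence $G$ is chordal, and as recalled in \autoref{subsec:parameters} the treewidth of a chordal graph equals its degeneracy, so the treewidth of $G$ is at most $d$. Applying \autoref{lem:bounded-treewidth} with $H = C_{K}$ (so $s = K$) then bounds the treewidth of $G \boxtimes C_{K}$ by $K(d+1)-1$, a function of $K$ and $d$ alone, exactly as in \autoref{cor:bounded_tw}.

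The step I expect to require the most care is handling the ``no'' instances, since the degeneracy bound alone does not bound the treewidth of an arbitrary input: a graph of degeneracy at most $d$ that fails to be a labeled $K$-leaf power may have large treewidth, so Courcelle's theorem cannot be invoked on it directly. I would resolve this by first testing, in linear fixed-parameter time, whether $G$ has treewidth at most $d$; if it does not, I report that $G$ is not a labeled $K$-leaf power, which is correct by the chordality argument above, and otherwise the product $G \boxtimes C_{K}$ has treewidth at most $K(d+1)-1$ and Courcelle's theorem applies. The overall running time is then linear in the size of the input with all non-polynomial dependence confined to $K$ and $d$: the product has $K\,|V(G)|$ vertices and, because the degeneracy is bounded, only linearly many edges, so constructing it and its free variables, testing its treewidth, and evaluating the formula are each linear. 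By \autoref{lem:general-prop} and \autoref{cor:general-mso} the formula holds on the product exactly when $G$ is a labeled $K$-leaf power, which completes the argument.
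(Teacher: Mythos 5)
Your proposal is correct and takes essentially the same route as the paper, whose entire proof is the remark that it follows the outline of \autoref{thm:1} with the weighted lemmas (\autoref{lem:general-prop}, \autoref{cor:general-mso}) substituted for their unweighted counterparts. The details you supply---that a labeled $K$-leaf power is in particular an unlabeled $K$-leaf power, hence chordal with treewidth at most $d$, and that one should pre-screen the treewidth of $G$ so that high-treewidth no-instances are rejected before invoking Courcelle's theorem---are points the paper leaves implicit rather than a genuinely different argument.
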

 
\begin{proof}
The proof follows the same outline as the proof of  \autoref{thm:1}, modified only to use the weighted versions of the lemmas above in place of their unweighted versions.
\end{proof}

\section{Dynamic Programming Algorithm}
\label{sec:dynamic}

Many graph problems, including a vast number of NP-hard problems, have been shown to be solvable in polynomial time when given a tree decomposition of constant width~\cite{MR1105479,bodlaender1988dynamic,MR1268488}. 
Dynamic programming on tree decomposition of graphs is an underlying technique to devise such algorithms, restricted to graphs of bounded treewidth~\cite{bodlaender1988dynamic}. Indeed, our application of Courcelle's theorem relies on such an algorithm to evaluate whether a logical formula is modeled by the given graph.  In this section, we present a direct dynamic programming algorithm to decide whether the input graph is a $k$-leaf power. 

Dynamic programming algorithms often use a variant of tree decomposition, called \textit{nice} tree decomposition. A nice tree decomposition of graph $G$ is a rooted tree decomposition $T$ of $G$ in which each bag $X_i$ is one of the following:
\begin{itemize}
\item a \textit{leaf} bag in which $|X_i|$ = 1,
\item a \textit{forget} bag with one child $X_j$, where $X_i \subset X_j$ and $|X_j| - |X_i| = 1$,
\item an \textit{introduce} bag with one child $X_j$, where $X_j \subset X_i$ and $|X_i| - |X_j| = 1$, or
\item a \textit{join} bag with two children $X_j$ and $X_{j'}$, where $X_i = X_j = X_{j'}$,
\end{itemize}

For a forget bag we call  $X_j \setminus X_i$ the \textit{forgotten vertex}. Given a graph $G$ and its tree decomposition of width $w$, one can construct a nice tree decomposition of equal width in linear time~\cite{kloks1994treewidth}.  
Our algorithm uses these restrictions on tree decompositions, but we need others as well.
Therefore, we will define an \textit{extra} nice tree decomposition.
In comparison with nice tree decomposition, an extra nice tree decomposition has one more type of bag, an \textit{edge-associated}  bag. An edge-associated bag $X_i$ has a child $X_j$ where  $X_i = X_j$ and exactly one edge $e(u,v)$, $u,v \in X_i$, is associated with $X_i$. Using a nice tree decomposition of $G$, we can simply construct such tree decomposition in the following way: for each pair of adjacent pairs $u$ and $v$ in bag $X_i$, if $e(u,v)$ is not yet associated to a bag, create a new bag $X_{i'}$ as a new parent of $X_i$ where $X_{i'} = X_i$ and associate edge $e(u,v)$ to $X_{i'}$. The old parent of $X_i$, if it exists, is now the parent of $X_{i'}$.

Our algorithm is run over a \textit{mixed} decomposition of graphs $G$ and graph product $H$. Given an extra nice decomposition of $G$ of width $w$, for each vertex $v$ in bag $X_i$, add all vertices $(v,r) \in H$ for $ 0 \leq r < k$. Hence, the size of each bag of the mixed  decomposition is at most $wk$.  Our second algorithm can therefore be viewed as using the same graph product technique that our first algorithm used, applied directly in a dynamic programming algorithm  rather than indirectly via Courcelle's theorem.

\subsection{Local Picture of a $\pmb k$-leaf root}

Intuitively, for each bag of mixed decomposition $M$, we describe, a local picture which describes a subtree of a $k$-leaf root $T$, if one exists. This  description allows us to check whether the big picture, $T$, is a $k$-leaf root of $G$. For a bag $X_i$ let $G_i$ and $H_i$ be a set of vertices of  $X_i$ that belongs to $G$ and $H$, respectively. 

A local picture of $T$ at bag $X_i$ consists of the following ingredients:
\begin {itemize}
\item A partition of $H_i$ into connected components (with one more partition set for vertices of $H_i$, not participating in $T$).
\item A distance matrix between each pair of vertices in the same component. Each coefficient of the matrix will store either a number between 1 and $k$ (the distance between two vertices), or a special flag $\infty$ to represent a finite distance greater than $k$.
\item A designated root vertex for each component, the vertex that will become the closest to the root of $T$. 
\item For each vertex $v$ of $G_i$ in $X_i$, a corresponding vertex $(v,i)$ chosen as the leaf representative of $v$ in $H_i$.
\end{itemize}
To reduce the number of local pictures that we need to consider,
consistently with the embedding of \autoref{sec:embedding},
we will restrict our attention to local pictures in which the vertices $(v,i)$ of $H_i$ associated with a single vertex $v$ of $H_i$ are either part of a single component or not in any component,
and have distances within that component consistent with their distances along the cycle $C_k$.
We will associate with each remaining local picture a Boolean variable. We will set this variable to True if there is a subtree of $H$ within the bags descending from $X_i$ that is consistent with the local picture
and with the requirement that it be part of a leaf root of $G$. Otherwise, we set this variable to False.   
In order to enforce the requirement that the local picture be consistent with being part of a leaf root, we only consider local pictures such that, for the distances in
each component, the pairs of representative vertices at distance at
most $k$ are adjacent in $G$ and pairs with distance $\infty$ are non-adjacent.
Adjacent vertices in $G_i$ whose representatives belong to different components are allowed, however, as their distance will be checked at a higher level of the tree decomposition where their components merge. If these conditions are not met, we set the associated Boolean variable of the local picture to False. 

We process $M$ in post-order from leaves to the root of $M$ computing for each bag and each local picture the Boolean variable for that local picture. This bottom-up ordering ensures that the variables for local pictures of the child or children of a bag are known before we try to compute the variables at the bag itself.  After computing these values, $G$ will be a $k$-leaf power if and only if there exists a local picture at the root bag whose associated Boolean variable is true. If $G$ is a $k$-leaf power, one can form a $k$-leaf root by creating a vertex as the root of the $k$-leaf root and connect it to the root of each component of the True local picture, with an appropriate number of edges (at most k edges for each connection). Further, such ordering allows us to remember the distance to the nearest forgotten leaf as $\mu_v$  for each non-leaf vertex $v$ of each local picture for distance-checking purposing. In another word, $\mu_v$ stores the distance from $v$ to the nearest forgotten leaf that is no more present in the current local picture. When the bottom-up traversal of $M$ reaches a bag $X_i$, one of the following cases occurs: 

\begin{itemize}
\item $X_i$ may be a leaf of $M$. In this case, it contains a vertex $v \in G$ alongside all vertices $(v,r)$, $ 0 \leq r < k$. A local picture is set to True if and only if it has one component, a single chain of vertices with the appropriate distances, ending at the vertex designated as the  representative of $v$.

\item $X_i$ may be a forget bag. In this case, it has one child $X_j$ where $X_i \subset X_j$ and $H_j \wedge (X_j \backslash X_i) = \{(v,r)\}, 0 \leq r < k$. A local picture $\ell$ at $X_i$ is set to True if and only if it is formed by removing vertices $(v,r)$ (a chain of vertices representing $v \in G_j$) from a True local picture $\ell'$ of $X_j$.The removal of such chain of vertices may result in more number of components in the corresponding True local picture $\ell$. If a removed vertex has a child other than the one in the chain, that child becomes the root of a new component in $\ell$. Further, as the designated leaf of such chain is forgotten, there might be a need to update $\mu_u$ for a vertex $u$ in $\ell$ within the vicinity ($<k$) of the forgotten leaf. 

\item $X_i$ may be an introduce bag. In this case, it has one child $X_j$ where $X_j \subset X_i$ and $G_i \wedge (X_i \backslash X_j) = \{v\}$. A local picture at $X_i$ is set to True if and only if it can be formed from one of the True local pictures of $X_j$ by adding one more component which is a path $(v,r), \dots , (v,r')$, $ 0 \leq r,r' < k$. Because the subtree descending from $X_i$ does not contain any edge-associated bags for edges incident with $v$, this component cannot be connected to any of the existing components in the local picture in $X_j$.

\item $X_i$ may be an edge-associated bag. In this case, it has one child $X_j$ where $X_i = X_j$ and there exists an edge $e(u,v)$ associated to bag $X_i$. A local picture $L$ at $X_i$ is set True if and only if either there exists an exact True copy of the local picture at $X_j$, or using the edge $e(u,v)$, $L$ can be formed from a True local picture at $X_j$ by connecting a root $x$ of one component to a vertex $w$ of another component.  Such connection can be made if the resulting local picture obeys the distance matrix and also the distance from each forgotten leaf of one component to a (forgotten or existing) leaf of another component is greater than $k$ as their corresponding vertices  in $G$  cannot be adjacent given the definition of extra nice decomposition (when a vertex is forgotten, it cannot be reintroduced as the bags containing that vertex form a nonempty connected subtree). 

\item $X_i$ may be a join bag. In this case, it has two children $X_j$ and $X_{j'}$ where $X_i= X_j = X_{j'}$. A local picture $L_1$ at $X_i$ has its value set to True if and only if there exist True  local pictures $L_2$ and $L'_2$ at $X_j$ and $X_{j'}$, respectively, that when combined together, they form $L_1$. To find such a combination, we consider all pairs of local pictures for $L_2$ and $L'_2$  at $X_j$ and $X_{j'}$ and construct a bipartite graph $F$. One side of bipartition includes vertices of $H_i \in X_i$, each with two neighbors, representing the two subtrees, the vertex belongs to in the local pictures $L_2$ and $L'_2$. $L_1$ can be formed if and only if $F$ is a forest, its subtrees are subtrees of $F$ and the combined local picture obeys the distance matrix at $L_1$ and no forgotten or existing leaf of $L_2$ get a distance at most $k$ to a forgotten leaf of $L_2'$ or vice versa. 

\end{itemize}

\subsection{Analysis}

To analyze our dynamic programming algorithm, we need to understand the number of local pictures that are possible in each bag of the tree decomposition. We can perform this analysis by combining the following factors, each of which depends only on the width $w$ and leaf power parameter $k$ of the given input.
\begin{itemize}
\item For each vertex $v$ of $G_i$, there are $O(k^2)$ choices for the representative vertex and the length of the path using vertices $(v,i)$ in the component of this representative vertex. The total number of such choices for all vertices of $G_i$ is $k^{O(w)}$.
\item Given these choices of paths, there are $w^{O(w)}$ ways of connecting the paths into components and selecting the vertex closest to the root within each component.
\item Within a component that connects $c$ paths, there are $(ck)^{O(w)}$ choices of distance matrix for the whole component consistent with the distances within each path and with the assumption that the distances come from a tree.
\end{itemize}

Therefore, there are $(wk)^{O(w)}$ local pictures considered by our algorithm for each bag.
The time for the algorithm is dominated by the join bags; there are $n-1$ of these bags, and in each such bag we consider a number of pairs of local pictures bounded by the square of the number of local pictures per bag. Each pair of local pictures in the two child bags takes time polynomial in $w$ and $k$ to check for whether it is consistent and to find the corresponding local picture in the join bag. So the total time for our dynamic programming algorithm is $O\bigl(n(wk)^{O(w)}\bigr)$.

\section{Conclusion}
\label{sec:conclusion}

We have provided two fixed-parameter algorithms to recognize $k$-leaf powers (and generalized $K$-leaf powers) for graphs of bounded degeneracy. In both methods we use embeding of a $k$-leaf root of a $k$-leaf power graph in the graph product of the input graph and a $k$-vertex cycle $C_k$. Our first algorithm finds a logical characterization of the leaf roots that are embedded in this way, and applies Courcelle's theorem to determine the existence of a subgraph of the graph product that meets our characterization.


Our methods of using low-treewidth supergraphs to represent vertices and edges that are not part of the input graph, and of using graph products to find these supergraphs helped us to solve the problem directly using dynamic programming rather than by applying Courcelle's theorem. Additionally, these methods may be useful in other graph problems. For instance, the same graph product technique would have greatly simplified the application of Courcelle's theorem in our recent work on planar split thickness~\cite{MR3761167}: a graph $G$ has planar split thickness $k$ if and only if $G\boxtimes K_k$ has a planar subgraph $S$ such that, for each non-horizontal edge of the product, the endpoints of the edge are aligned with the endpoints of an edge in $S$. In reducing the logical complexity of problems such as these, our first method also makes it more likely that faster model checkers for restricted
fragments of MSO logic \cite{bannach_et_al:LIPIcs:2018:9469} can be applied to our problem. 

Our dynamic programming algorithm has significantly better dependence on its parameters than
our first, logic-based algorithm. However, its dependence is still not singly exponential.
We leave whether this is possible as open for future research.

\bibliography{ref}   

%
%

\end{document}